\documentclass[12pt]{article}
\usepackage[margin=1in]{geometry} 
\usepackage{amsmath,amsthm,amssymb,graphicx,mathtools,tikz,hyperref}
\usetikzlibrary{positioning}
\usepackage{hyperref}
\usepackage{pgf,tikz}
\usepackage{tkz-euclide}
\usepackage{subfig}
\usetikzlibrary{arrows}
\usepackage{hyperref}
\usetikzlibrary{automata,positioning}
\usetikzlibrary{decorations.pathmorphing,shapes}
\usepackage[ruled,vlined,linesnumbered]{algorithm2e}

\newcommand{\norm}[1]{\left\lVert#1\right\rVert}
\newcommand{\V}{\mathbf{v}}
\newcommand{\U}{\mathbf{u}}
\newcommand{\p}{\mathbf{p}}
\newcommand{\q}{\mathbf{q}}
\newcommand{\One}{\mathbf{1}}
\newcommand{\M}{\mathbf{M}}
\newcommand{\A}{\mathbf{A}}
\newcommand{\Lp}{\mathbf{L}}
\newcommand{\e}{\mathbf{e}}

\newtheorem{theorem}{Theorem}[section]
\newtheorem{proposition}[theorem]{Proposition}

\newtheorem{lemma}[theorem]{Lemma}

\newtheorem{definition}[theorem]{Definition}

\usepackage[textsize=scriptsize,textwidth=3cm,backgroundcolor=orange!20]{todonotes}

 \hypersetup{
 colorlinks,
 linkcolor=blue,
 citecolor=green
 }
\begin{document}
\SetEndCharOfAlgoLine{}
\date{}

\title{Spectral methods for testing cluster structure of graphs}
\author{Sandeep Silwal\thanks{Massachusetts Institute of Technology, Cambridge, MA 02139. Email:~\url{silwal@mit.edu}.  Research supported by the Raymond Stevens Fund.}
\and  Jonathan Tidor\thanks{Massachusetts Institute of Technology, Cambridge, MA 02139. Email:~\url{jtidor@mit.edu}. Research partially supported by an NSF Graduate Research Fellowship.}}

\maketitle
\begin{abstract}
 In the framework of graph property testing, we study the problem of determining if a graph admits a cluster structure. We say that a graph is $(k, \phi)$-clusterable if it can be partitioned into at most $k$ parts such that each part has conductance at least $\phi$. We present an algorithm that accepts all graphs that are $(2, \phi)$-clusterable with probability at least $\frac{2}3$ and rejects all graphs that are $\epsilon$-far from $(2, \phi^*)$-clusterable for $\phi^* \le  \mu \phi^2 \epsilon^2$ with probability at least $\frac{2}3$ where $\mu > 0$ is a parameter that affects the query complexity. This improves upon the work of Czumaj, Peng, and Sohler by removing a $\log n$ factor from the denominator of the bound on $\phi^*$ for the case of $k=2$. Our work was concurrent with the work of Chiplunkar et al.\@ who achieved the same improvement for all values of $k$. Our approach for the case $k=2$ relies on the geometric structure of the eigenvectors of the graph Laplacian and results in an algorithm with query complexity $O(n^{1/2+O(1)\mu} \cdot \text{poly}(1/\epsilon, 1/\phi,\log n))$.
\end{abstract}
\section{Introduction}

In this paper we study property testing of graphs in the bounded degree model. The input is a graph $G = (V,E)$ on $n$ vertices where all the vertices have degree at most $d$. Given a graph property $\mathcal{P}$, we say that $G$ is $\epsilon$-far from satisfying $\mathcal{P}$ if $\epsilon d n$ edges need to be added or removed from $G$ to satisfy $\mathcal{P}$. A property testing algorithm for $\mathcal{P}$ is an algorithm that accepts every graph $G$ satisfying $\mathcal{P}$ with probability at least $\frac{2}3$ and rejects every graph that is $\epsilon$-far from satisfying $\mathcal{P}$ with probability at least $\frac{2}3$. 

$G$ is represented as an oracle that returns the $i$th neighbor of any vertex $v$ for any $1 \le i \le d$. If $i$ is larger than the degree of $v$, a special symbol is returned. The goal of property testing is to find an algorithm with an efficient query complexity, defined as the number of oracle queries that the algorithm performs. This framework of property testing of graphs was developed by Goldreich and Ron \cite{ron_goldreich_original} and
has been applied to study various properties such as bipartiteness \cite{bipartiteness} and 3-colorability \cite{ron_goldreich_original}. See \cite{ron_goldreich_original} and \cite{property_testing_survey1} for more examples.

Our paper deals with a generalization of property testing. We are interested in testing for a family of properties $\mathcal{P}$ that depends on a single parameter $\alpha$ and is nested, satisfying $\mathcal{P}(\alpha) \subseteq \mathcal{P}(\alpha')$ for all $\alpha \ge \alpha'$. Our goal is an algorithm which accepts graphs satisfying $\mathcal{P}(\alpha)$ with probability at least $\frac{2}3$ and rejects graphs that are $\epsilon$-far from satisfying $\mathcal{P}(\alpha')$ with probability at least $\frac{2}3$ where $\alpha \ge \alpha'$. A diagram for this generalization of property testing is shown in Figure \ref{fig:testing_order}.

\begin{figure}[!htbp] 
    \centering
    \includegraphics[scale=0.45]{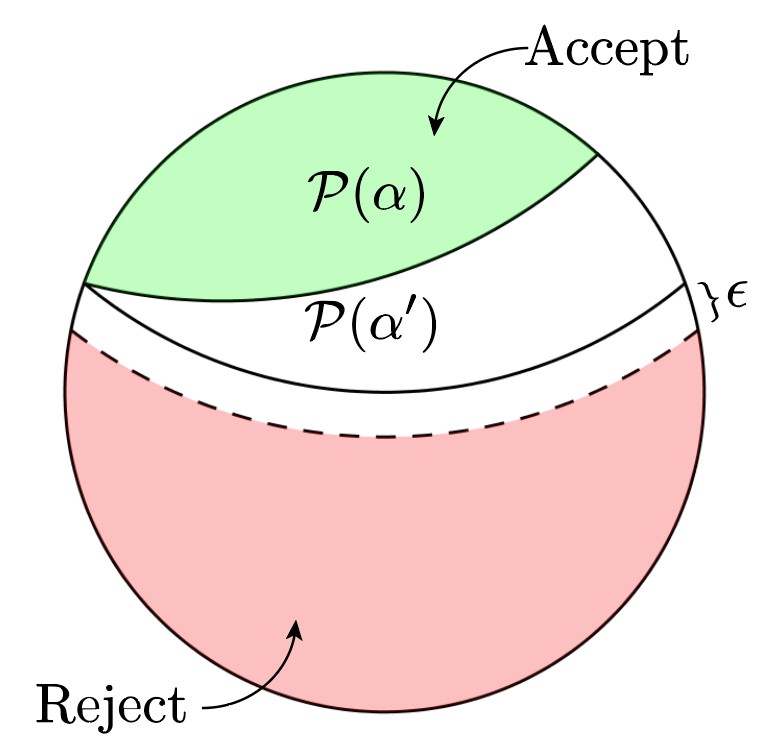}
    \caption{We want to accept graphs satisfying $\mathcal{P}(\alpha)$ and reject graphs that are $\epsilon$-far from satisfying $\mathcal{P}(\alpha')$ where $\alpha \ge \alpha'$.}
    \label{fig:testing_order}
\end{figure}

We are interested in the property of $k$-clusterability as defined by Czumaj, Peng, and Sohler in \cite{k_clusterability}. Roughly speaking, a graph is $k$-clusterable if it can be partitioned into at most $k$ clusters where vertices in the same cluster are ``well-connected.'' The connectedness of the clusters is measured in terms of their inner conductance, defined below. The idea of using conductance for graph clustering has been studied in numerous works, such as \cite{clustering_example}.

Testing for $k$-clusterability is inspired by expansion testing which has been studied extensively. A graph is called an $\alpha$-expander if every $S \subset V$ of size at most $|V|/2$ has neighborhood of size at least $\alpha |S|$. Czumaj and Sohler \cite{czumaj_sohler_expansion} showed that an algorithm proposed by Goldreich and Ron in \cite{ron_goldreich_expansion} can distinguish between $\alpha$-expanders and graphs which are $\epsilon$-far from having expansion at least $\Omega(\alpha^2/\log n)$ in the bounded degree model. This work was subsequently improved by Kale and Seshadhri \cite{kale_seshadhri} and then by Nachmias and Shapira \cite{asaf_asaf} who showed that the same algorithm distinguishes graphs which are $\alpha$-expanders from graphs which are $\epsilon$-far from $\Omega(\alpha^2)$-expanders. The work of Nachmias and Shapira also shows that expansion testing is related to the second eigenvalue of the Laplacian matrix. In addition, as shown by \cite{scooped}, testing for $k$-clusterability is related to the $(k+1)$st eigenvalue of the Laplacian so $k$-clusterability testing is a natural extension of expansion testing.

We now define conductance which is closely related to expansion. Let $S \subset V$ such that $|S| \le |V|/2$. The conductance of $S$ is defined to be $\phi_G(S) = \frac{e(S, V \setminus S)}{d|S|}$ where $e(S, V \setminus S)$ is the number of edges between $S$ and $V \setminus S$. The conductance of $G$ is defined to be the minimum conductance over all subsets $|S| \le |V|/2$ and is denoted $\phi(G)$. Now for any $S \subseteq V$, let $G[S]$ denote the induced subgraph of $G$ on the vertex set defined by $S$. We let $\phi(G[S])$ denote the conductance of this subgraph. To avoid confusion we call $\phi(G[S])$ the $\textit{inner conductance}$.

We say $G$ is $(k, \phi)$-clusterable if there exits a partition of $V$ into at most $k$ subsets $C_i$ such that $\phi(G[C_i]) \ge \phi$ for all $i$. This definition is slightly different from the one used by Czumaj, Peng, and Sohler in \cite{k_clusterability} because their definition also requires $\phi_G(C_i)$ to be bounded by $O(\epsilon^4 \phi^2)$ for all $i$. The algorithm of Czumaj, Peng, and Sohler accepts all $(k, \phi)$-clusterable graphs with probability at least $\frac{2}3$ and rejects all graphs that are $\epsilon$-far from $(k, \phi^*)$-clusterable where $\phi^* = c'_{d,k} \, \frac{\phi^2 \epsilon^4}{\log n} $ and where $c'_{d,k}$ depends only on $d,k$ \cite{k_clusterability}. Our work improves upon this result by removing the $\log n$ dependency for the case of $k=2$. 

Our main result is an algorithm in the bounded degree model that accepts every $(2, \phi)$-clusterable graph with probability at least $\frac{2}3$ and rejects every graph that is $\epsilon$-far from $(2, \phi^*)$-clusterable with probability at least $\frac{2}3$ if $\phi^* \le \mu \phi^2 \epsilon^2$ where $\mu > 0$ is a parameter that we can choose which affects the query complexity. Our algorithm has query complexity $O(n^{1/2+O(1)\mu} \cdot \text{poly}(1/\epsilon, 1/\phi,\log n))$ where $\text{poly}(1/\epsilon, 1/\phi, \log n)$ denotes a polynomial in $1/\epsilon, 1/\phi$,  and $\log n$.

The work of Czumaj et al.\@ for testing $k$-clusterability uses property testing of distributions, such as testing the $l_2$ norm of a discrete distribution and testing the closeness of two discrete distributions. For some work on testing the norm of a discrete distribution and testing closeness of discrete distributions, see \cite{czumaj_sohler_expansion} and \cite{l2_closeness_tester} respectively.

Our work was concurrent with the work of Chiplunkar et al.\@ \cite{scooped} who give an algorithm that for any fixed $k$ accepts $(k, \phi)$-clusterable from graph with probability at least $\frac{2}3$ and rejects every graph that is $\epsilon$-far from $(k, \gamma\phi^2)$-clusterable with probability at least $\frac{2}3$ using $O(n^{1/2+O(\gamma)})$ queries. This matches the query bound achieved by Nachmias and Shapira in the expander testing setting ($k=1$). The algorithm of Chiplunkar et al.\@ also looks at the $(k+1)$st largest eigenvalue of a transformation of the lazy random walk matrix $\M$, and accepts if this eigenvalue is below a certain threshold. We essentially employ the same approach in the case of $k=2$. Our proof of correctness is rather different, relying on the geometric properties of the endpoint distributions of random walks on the input graph to deduce the size of the eigenvalues of $\M$. 

We present our algorithm, \textbf{Cluster-Test}, in Section \ref{sec:overview}. We prove that \textbf{Cluster-Test} accepts $(2, \phi)$-clusterable graphs in Section \ref{sec:completeness} and that it rejects graphs that are $\epsilon$-far from $(2, \mu \phi^2\epsilon^2)$-clusterable in Section \ref{sec:soundness}.

\subsection{Definitions} \label{sec:overview}
\begin{definition}[Graph clusterability] \normalfont
$G = (V, E)$ is $(2, \phi)$-clusterable if the conductance of $G$ is at least $\phi$ or $V$ can be partitioned into two subsets $C_1$ and $C_2$ such that the inner conductance of $C_i$ is at least $\phi$ for each $i \in \{1,2\}.$
\end{definition}

The motivating idea in designing \textbf{Cluster-Test} is to compute the rank of $\M - \frac{1}n\mathbf{J}$ where $\M$ is the lazy random walk matrix and $\mathbf{J}$ is the matrix of all $1$'s. Essentially, we show when $G$ is $(2, \phi)$-clusterable, $\M - \frac{1}n \mathbf{J}$ is ``close'' to a rank $1$ matrix while when $G$ is $\epsilon$-far from $(2, \mu \phi^2\epsilon^2)$-clusterable, $\M$ is not ``close'' to rank $1$. The intuition for this comes from Lemma \ref{lem:eigenvalue_gap} which tells us that the third eigenvalue of $\M$ is small if $G$ is $(2,\phi)$-clusterable.

Because computing the eigenvalues of $\M$ is too expensive, we instead look at the eigenvalues of $2$ by $2$ principal submatrices of $(\M - \frac{1}n\mathbf{J})^{2t}$. These principal submatrices are the Gram matrices of the endpoint distribution vectors of random walks on $G$ minus $\frac{1}n\One$. This allows us to show that if $G$ is $(2, \phi)$-clusterable then we can expect all of these submatrices to have at least one small eigenvalues while if $G$ is $\epsilon$-far from $(2, \phi)$-clusterable, both of the eigenvalues of most of these principal submatrices are large. This is essentially what our algorithm tests for.

Before we present our algorithm we introduce some standard definitions and tools that we use. Given a graph $G$ with maximum degree $d$, we work with the lazy random walk matrix $\M$ defined as follows: the off diagonal entries of $\M$ are $\frac{1}{2d}$ times the corresponding entry in the adjacency matrix while the diagonal entries of $\M$ are set so that the columns of $\M$ add to $1$ which corresponds to adding self loops of the appropriate weights in $G$. We then define the Laplacian matrix $\Lp$ as $2\textbf{I}-2\M$. Our definition of the Laplacian follows the convention used in \cite{k_clusterability} so that we can easily use some of their results.

Let $0 = \lambda_1 \le \lambda_2 \le \cdots \le \lambda_n \le 2$ denote the eigenvalues of $\Lp$ and let $\V_1, \ldots, \V_n$ denote the corresponding orthonormal eigenvectors. Let $\nu_1 \ge \nu_2 \ge \cdots \ge \nu_n$ denote the eigenvalues of $\M$ where $\nu_i = 1 - \frac{\lambda_i}2$ for $1 \le i \le n$. For $u \in V$, we define $\p_u^t$ to be the probability distribution of the endpoint of a length $t$ lazy random walk that starts at vertex $u$. That is, 
\begin{equation} \label{eq:p_expansion}
    \p_u^t = \M^t\One_u  = \sum_{i=1}^n \V_i(u) \left (1-\frac{\lambda_i}2 \right)^t\V_i,
\end{equation}
where $\One_u$ is the vector with $1$ in the entry corresponding to the vertex $u$ and $0$ elsewhere. Because $\V_1=\frac{1}{\sqrt{n}}\One$, we typically work with  $\q_u^t = \p_u^t - \frac{1}n \One$ for convenience. From Eq.\@ \eqref{eq:p_expansion}, we have
\begin{equation} \label{eq:q_expansion}
   \q_u^t = \p_u^t - \frac{1}n \One = \sum_{i \ge 2}\V_i(u)\left(1 - \frac{\lambda_i}{2} \right)^t \V_i.
\end{equation}

\subsection{Preliminary Results}
In this paper $\norm{\cdot}$ always denotes the $l_2$ norm unless stated otherwise. We need the following classical result from \cite{weyl} which roughly states that eigenvalues are stable under small perturbations.

\begin{proposition}[Weyl's Inequality] \label{thm:weyl} Let $\mathbf{B} = \widetilde{\mathbf{B}}+\mathbf{E}$ and suppose $\mathbf{B}$ has eigenvalues $\mu_1 \ge \cdots \ge \mu_n$ and $\widetilde{\mathbf{B}}$ has eigenvalues $\tilde{\mu}_1 \ge \cdots \ge \tilde{\mu}_n$. Furthermore, suppose $\norm{\mathbf{E}}_F \le \epsilon$ where $\norm{ \cdot }_F$ denotes the Frobenius norm.  Then $|\mu_i - \tilde{\mu}_i| \le \epsilon$ for all $1 \le i \le n$.
\end{proposition}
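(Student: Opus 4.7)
The plan is to deduce this form of Weyl's inequality from the sharper operator-norm version and to prove the latter via the Courant--Fischer min-max characterization of eigenvalues. The proposition implicitly requires that $\mathbf{B}$ and $\widetilde{\mathbf{B}}$ be real symmetric so that they have a full set of real eigenvalues; this is automatic in the later applications of the paper, where the relevant matrices are $2 \times 2$ Gram matrices.

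First I would reduce Frobenius to operator norm: writing $\sigma_1 \ge \cdots \ge \sigma_n \ge 0$ for the singular values of $\mathbf{E}$, one has $\norm{\mathbf{E}}_{op} = \sigma_1$ and $\norm{\mathbf{E}}_F = (\sigma_1^2 + \cdots + \sigma_n^2)^{1/2}$, so $\norm{\mathbf{E}}_{op} \le \norm{\mathbf{E}}_F \le \epsilon$. It therefore suffices to prove the cleaner inequality $|\mu_i - \tilde{\mu}_i| \le \norm{\mathbf{E}}_{op}$.

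Next I would apply the Courant--Fischer min-max formula: for a real symmetric matrix $\mathbf{B}$ with eigenvalues $\mu_1 \ge \cdots \ge \mu_n$,
\[ \mu_i \;=\; \max_{\dim S = i}\ \min_{x \in S,\ \norm{x}=1}\ x^T \mathbf{B} x, \]
and likewise for $\tilde{\mu}_i$ with $\widetilde{\mathbf{B}}$. For every unit vector $x$ we have $|x^T \mathbf{E} x| \le \norm{\mathbf{E}}_{op}$, which gives the pointwise sandwich
\[ x^T \widetilde{\mathbf{B}} x - \norm{\mathbf{E}}_{op} \;\le\; x^T \mathbf{B} x \;\le\; x^T \widetilde{\mathbf{B}} x + \norm{\mathbf{E}}_{op}. \]
Taking the min over unit $x \in S$ and then the max over $i$-dimensional subspaces $S$ preserves these inequalities, since they are uniform constant shifts in both directions. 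This yields $\tilde{\mu}_i - \norm{\mathbf{E}}_{op} \le \mu_i \le \tilde{\mu}_i + \norm{\mathbf{E}}_{op}$, and hence the claim.

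There is no substantive obstacle: every step is standard. The only care needed is in observing that monotonicity of $\min$ and $\max$ under uniform constant shifts is what carries the additive perturbation through the Courant--Fischer characterization, and in noting that the Frobenius-to-operator reduction is lossy in general but nonetheless sufficient for the stated bound.
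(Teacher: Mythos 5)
The paper states this proposition as a classical fact, citing \cite{weyl}, and provides no proof of its own; there is therefore no in-paper argument to compare against. Your argument is correct and is the standard textbook proof: symmetry of $\mathbf{B}$ and $\widetilde{\mathbf{B}}$ is indeed implicitly assumed (the paper only invokes the proposition for symmetric matrices such as the Gram matrices $\A_{u,v}$), the reduction $\norm{\mathbf{E}}_{op} \le \norm{\mathbf{E}}_F$ is valid, and the Courant--Fischer min-max sandwich correctly yields $|\mu_i - \tilde{\mu}_i| \le \norm{\mathbf{E}}_{op} \le \epsilon$ because the perturbation enters as a uniform additive shift of the quadratic form on the unit sphere, which passes through the nested $\min$ and $\max$.
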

Our work relies on estimating dot products and norms of various distributions where we view distributions over $n$ elements as vectors in $\mathbb{R}^n$. To estimate these quantitites, we use the following result about distribution property testing.

\begin{theorem}[Theorem 1.2 in \cite{l2_closeness_tester}] \label{thm:dotproduct}
Let $\p,\q$ be two distributions with $b \ge \max(\norm{\p}^2, \norm{\q}^2)$. There is an algorithm $l_2 \textbf{-Inner-Product-Estimator} (\eta, \xi, b, \p, \q)$ which computes an estimate of $\langle \p, \q \rangle $ that is accurate to within additive error $\xi$ with probability at least $1-\eta$ and requires $c_{2.2} \, \frac{\sqrt{b}}{\xi} \log \frac{1}{\eta}$ samples from each of the distributions $\p$ and $\q$ for some absolute constant $c_{2.2}$.
\end{theorem}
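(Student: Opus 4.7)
The plan is to construct an unbiased estimator $\hat I$ of $\langle \p, \q\rangle$ from $m$ samples of each distribution, with variance small enough that Chebyshev's inequality gives $|\hat I - \langle \p, \q\rangle| \le \xi$ with, say, probability $2/3$; then the median of $K = O(\log(1/\eta))$ independent copies of $\hat I$ deviates from $\langle \p, \q\rangle$ by more than $\xi$ only if more than half the copies do, which happens with probability at most $\eta$ by a standard Hoeffding argument. The target is $m = O(\sqrt{b}/\xi)$ samples per copy, for a total sample complexity matching the $c_{2.2}\sqrt{b}/\xi \cdot \log(1/\eta)$ bound in the statement.

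A convenient route to the estimator is the polarization identity $2\langle \p,\q\rangle = \norm{\p}^2 + \norm{\q}^2 - \norm{\p - \q}^2$, which reduces the problem to estimating each of the three $l_2$ quantities on the right to additive error $\xi/3$. For $\norm{\p}^2$ and $\norm{\q}^2$ the classical unbiased collision-counting estimator suffices: draw $m$ samples from $\p$ and output the number of colliding pairs divided by $\binom{m}{2}$. For $\norm{\p - \q}^2$ use the Poisson-sampled estimator $\frac{1}{m^2}\sum_i \bigl[(c_i - d_i)^2 - c_i - d_i\bigr]$, where Poissonization makes $c_i$ and $d_i$ independent Poissons with means $m p_i$ and $m q_i$; this is unbiased for $\norm{\p - \q}^2$ by a short calculation using $\mathrm{Var}(\mathrm{Poi}(\lambda)) = \lambda$.

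The main obstacle, and the step that forces the particular choice of estimators rather than the naive $\frac{1}{m^2}\sum_i c_i d_i$, is the variance analysis of the three sub-estimators. For the collision estimator one expands the variance in symmetric moments of $\p$ and uses $\norm{\p}_\infty \le \norm{\p} \le \sqrt{b}$ (and similarly for $\q$) to get a variance of order $b/m^2$, so Chebyshev yields additive error $O(\sqrt{b}/m)$. The Poissonized $\norm{\p - \q}^2$ estimator is even cleaner, since independence across indices splits its variance into a sum of per-coordinate Poisson variances, each of which one expands and bounds using only $\norm{\p}^2, \norm{\q}^2 \le b$ together with Cauchy–Schwarz bounds such as $\sum_i p_i q_i^2 \le \norm{\q}_\infty \langle \p, \q\rangle$. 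Setting the resulting error $O(\sqrt{b}/m)$ equal to $\xi$ yields $m = O(\sqrt{b}/\xi)$ per copy, and combining with the median-of-means amplification above gives the claimed bound.
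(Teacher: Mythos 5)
This statement is not proved in the paper; it is imported as a black box from the cited reference (Theorem~1.2 of \cite{l2_closeness_tester}), so there is no ``paper's own proof'' to compare against. Your proposal's architecture---an unbiased estimator whose error is controlled by Chebyshev, followed by median-of-trials amplification to boost the success probability from a constant to $1-\eta$ with a $\log(1/\eta)$ overhead---is the standard one and is what underlies the cited result.

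However, there is a real gap in your variance analysis, and it is load-bearing for the claimed sample complexity. You assert that the collision estimator for $\norm{\p}^2$ has variance ``of order $b/m^2$.'' Expanding the variance of $\hat Y = \binom{m}{2}^{-1}\sum_{i<j}\mathbf 1[X_i=X_j]$, the $\Theta(m^3)$ pairs of index-pairs that overlap in exactly one coordinate contribute a covariance term, giving
\[
\mathrm{Var}(\hat Y)\;\asymp\;\frac{\norm{\p}^2}{m^2}\;+\;\frac{\sum_a p_a^3}{m},
\]
and the best generic bound on the second numerator is $\sum_a p_a^3\le \norm{\p}_\infty\norm{\p}^2\le b^{3/2}$. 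When $m>\sqrt b$ (equivalently $\xi<1$, which is the only interesting regime), the term $b^{3/2}/m$ dominates $b/m^2$, so Chebyshev gives additive error $\Theta\bigl(b^{3/4}/\sqrt m\bigr)$, not $O(\sqrt b/m)$. Solving for $\xi$ yields $m=\Theta(b^{3/2}/\xi^2)$ rather than the claimed $O(\sqrt b/\xi)$. The Poissonized estimator for $\norm{\p-\q}^2$ has the same issue: its per-coordinate variance contains a term $\asymp m^3(p_i+q_i)(p_i-q_i)^2$, whose sum is bounded by $\sqrt b\,\norm{\p-\q}^2$, which again can be as large as $b^{3/2}$ and cannot be discarded by the $\norm{\p}_\infty\le\sqrt b$ bound you invoke. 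So as written, the proposal does not establish the $c_{2.2}\sqrt b/\xi\cdot\log(1/\eta)$ sample bound: either one must exhibit a more careful cancellation across the three terms of the polarization identity, or the estimator must be replaced; simply citing ``variance of order $b/m^2$'' for the collision and Poissonized pieces is not correct. Note also that this is exactly the regime the paper operates in: in Theorem~\ref{thm:main}, $\xi\approx n^{-1-O(\mu)}$ while $b\approx\sigma/4\approx(\eta n)^{-1}$, so $\xi\ll b$ and the neglected terms are the dominant ones.
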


\begin{theorem}[Lemma 3.2 in \cite{k_clusterability}] \label{thm:l2norm}
Let $\p \in \mathbb{R}^n$ be a distribution over $n$ elements. There is an algorithm $l_2 \textbf{-Norm-Tester}(\sigma, r ,\p)$ that accepts if $\norm{\p}^2 \le \frac{\sigma}4$ and rejects if $\norm{\p}^2 \ge \sigma$ with probability at least $1-\frac{16 \sqrt{n}}{r}$ and requires $r$ samples from $\p$. A condition on the input $r$ is that it must be at least $16 \sqrt{n}$.
\end{theorem}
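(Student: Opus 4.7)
The plan is to use the classical collision-counting estimator. Draw $r$ i.i.d.\ samples $X_1, \ldots, X_r$ from $\p$ and form the collision count
\[
C \;=\; \bigl|\bigl\{(i,j) : 1 \le i < j \le r,\ X_i = X_j\bigr\}\bigr|.
\]
Since $\Pr[X_i = X_j] = \norm{\p}^2$ for every $i \ne j$, the statistic $\hat{\sigma} := C/\binom{r}{2}$ is an unbiased estimator of $\norm{\p}^2$. The tester accepts if $\hat{\sigma} \le \sigma/2$ and rejects otherwise.

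The heart of the proof is a variance bound on $\hat{\sigma}$. Writing $C = \sum_{i<j} Y_{ij}$ with $Y_{ij} = \mathbf{1}[X_i = X_j]$, the covariances of two indicators vanish for disjoint index pairs and equal $\norm{\p}_3^3 - \norm{\p}^4$ for pairs sharing one index. Summing over the $\Theta(r^3)$ share-one pairs and dividing by $\binom{r}{2}^2$ yields
\[
\mathrm{Var}(\hat{\sigma}) \;\le\; O\!\left(\frac{\norm{\p}^2}{r^2} + \frac{\norm{\p}_3^3}{r}\right) \;\le\; O\!\left(\frac{\norm{\p}^2}{r^2} + \frac{\norm{\p}^3}{r}\right),
\]
where the last step uses $\norm{\p}_3^3 \le \norm{\p}_\infty \cdot \norm{\p}^2 \le \norm{\p}^3$ (from $\norm{\p}_\infty^2 \le \norm{\p}^2$).

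Chebyshev's inequality finishes the job with a regime-adapted choice of gap. In the completeness case $\norm{\p}^2 \le \sigma/4$ take the gap $\sigma/4$ to the decision threshold $\sigma/2$; in the soundness case $\norm{\p}^2 \ge \sigma$ take the larger gap $\norm{\p}^2/2$. Substituting the variance bound, the failure probability in both cases simplifies to $O\!\bigl(\tfrac{1}{r\sqrt{\sigma}} + \tfrac{1}{r^2\sigma}\bigr)$. Combining with the hypothesis $r \ge 16\sqrt{n}$ and the universal bound $\sigma \ge 1/n$ (forced by $\norm{\p}^2 \ge 1/n$ for the soundness regime to be nonempty), we get $1/(r\sqrt{\sigma}) \le \sqrt{n}/r$ and $1/(r^2\sigma) \le n/r^2 \le \sqrt{n}/r$, so the failure probability is at most $16\sqrt{n}/r$ after tracking constants.

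The main obstacle is extracting $\sqrt{n}$ rather than a weaker $n$ or $n^2$ dependence. The crude variance bound $\norm{\p}_3^3 \le 1$ gives $\mathrm{Var}(\hat{\sigma}) = O(1/r)$ and hence failure probability $O(n^2/r)$, while using the symmetric Chebyshev gap $\sigma/4$ in both regimes also yields $O(n^2/r)$. The two key refinements---sharpening the variance contribution to $O(\norm{\p}^3/r)$ and centering Chebyshev at $\norm{\p}^2$ in the soundness case---together with $\norm{\p} \ge 1/\sqrt{n}$ are what finally produce the sharp $\sqrt{n}/r$ bound.
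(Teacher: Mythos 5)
This statement is imported verbatim from Czumaj--Peng--Sohler (Lemma~3.2 of \cite{k_clusterability}); the present paper never proves it, so there is no ``paper's own proof'' to compare against, and your argument must be judged on its own merits.

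On those merits the plan is sound and is in fact the standard way such a lemma is established: the collision count $C=\sum_{i<j}\mathbf 1[X_i=X_j]$ gives an unbiased estimator $\hat\sigma=C/\binom r2$ of $\norm{\p}^2$; the only cross-covariances are over the $\Theta(r^3)$ ordered triples sharing one index, each contributing $\norm{\p}_3^3-\norm{\p}^4$; the inequality $\norm{\p}_3^3\le\norm{\p}_\infty\norm{\p}^2\le\norm{\p}^3$ (via $\norm{\p}_\infty\le\norm{\p}$) sharpens the variance to $O\bigl(\norm{\p}^2/r^2+\norm{\p}^3/r\bigr)$; and the asymmetric Chebyshev gaps ($\sigma/4$ in completeness, $\norm{\p}^2/2$ in soundness) together with $\norm{\p}\ge1/\sqrt n$ are exactly the moves needed to turn this into a $\sqrt n/r$ failure probability rather than $n/r$. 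All of this is correct.

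Two soft spots. First, ``after tracking constants'' does a lot of work. Keeping explicit constants through the variance bound $\mathrm{Var}(\hat\sigma)\le 4\norm{\p}^2/r^2+4\norm{\p}^3/r$ and the soundness Chebyshev step gives a failure probability of at most $16n/r^2+16\sqrt n/r\le 17\sqrt n/r$ under $r\ge16\sqrt n$, which is slightly worse than the advertised $16\sqrt n/r$; hitting $16$ exactly appears to require either tightening the variance estimate (e.g.\ retaining the $-\norm{\p}^4$ terms, or using $\binom r2$ rather than $r^2/4$) or nudging the acceptance threshold away from $\sigma/2$. You should either carry out that bookkeeping or state honestly that you obtain $c\sqrt n/r$ for an explicit small $c$ and that the particular constant $16$ comes only from reading it off CPS. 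Second, your parenthetical that $\sigma\ge1/n$ is ``forced for the soundness regime to be nonempty'' has the regimes reversed: it is the \emph{completeness} condition $\norm{\p}^2\le\sigma/4$ that is vacuous when $\sigma<4/n$, and the soundness analysis never needs $\sigma\ge1/n$ at all --- the bound $\norm{\p}\ge1/\sqrt n$ holds for any distribution on $n$ elements and is what you really use. Fixing that sentence removes a misleading justification without changing the argument.
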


\section{Algorithm}
We now describe our algorithm \textbf{Cluster-Test}. Our algorithm performs multiple lazy random walks on the input graph and uses the distribution testing results from Theorems \ref{thm:dotproduct} and \ref{thm:l2norm} to approximate a principal submatrix of $(\M - \frac{1}n\mathbf{J})^{2t}$. As shown in Sections \ref{sec:completeness} and \ref{sec:soundness}, our choice of $t$ in Theorem \ref{thm:main} is large enough so that a random walk of length $t$ mixes well in the case that $G$ is $(2, \phi)$-clusterable and small enough so that the random walk does not mix well if $G$ is $\epsilon$-far from $(2, \mu\phi^2\epsilon^2)$-clusterable. 

We use the notation $\A_{u,v}$ for the $2$ by $2$ submatrix of $(\M - \frac{1}n\mathbf{J})^{2t}$ with rows and columns indexed by the vertices $u$ and $v$. Noting that $(\M - \frac{1}n \mathbf{J})^{2t} = \M^{2t} - \frac{1}n\mathbf{J}$, we can write
 \[ \A_{u,v} = 
\begin{bmatrix}
       \norm{\q_u^t}^2    & \langle \q_u^t, \q_v^t \rangle \\
       \langle \q_v^t, \q_u^t \rangle & \norm{\q_v^t}^2
\end{bmatrix}.
\]
Note that we assume $\A_{u,v}$ depends on the parameter $t$ that is inputted into $\textbf{Cluster-Test}.$ 

\begin{algorithm}
    \For{$R$ rounds}
    {Pick a pair of vertices $u$ and $v$ uniformly at random from $G$. \; 
    Run $N$ random walks of length $t$ starting from $u$ and starting from $v$.  \;
    Compute $l_2\textbf{-Norm-Tester}(\sigma, r, \p_u^t)$ and $l_2\textbf{-Norm-Tester}(\sigma, r, \p_v^t)$ using the $N$ samples of $\p_u^t$ and $\p_v^t$ from step 3. If either trial rejects, abort and reject $G$. \;
    Compute $l_2 \textbf{-Inner-Product-Estimator}(\eta, \xi, \sigma/4, \p_u^t, \p_v^t)$ with the results of step $3$ to approximate each entry of $\A_{u,v}$ within additive error $\xi$ for each entry. Call the approximation $\tilde{\A}_{u,v}$. \;
    Abort and reject $G$ if both the eigenvalues of $\tilde{\A}_{u,v}$ are larger than $\Lambda$. \;
    }
    Accept $G$.
    \caption{$\textbf{Cluster-Test}(G, R, t, \eta, \sigma, \xi, N, r, \Lambda)$}
    \label{alg:cluster-test}
\end{algorithm}
We now present our main theorem about the guarantees of $\textbf{Cluster-Test}$.
\begin{theorem}[Main Theorem] \label{thm:main}
Let $G$ be an $n$ vertex graph with maximum degree at most $d$. For any $\mu \in (0, C)$ we set
$$ R = \frac{10^{22}}{\epsilon^4}, \quad t =  \frac{64\max(c_{3.3},c_{3.5})\log n}{\phi^2}, \quad \eta = \frac{1}{24R}, \quad  \sigma =  \frac{16}{\eta n}, $$
$$ \xi = \frac{1}{10^5} \, \frac{1}{n^{1+128c_{3.3}c_{3.10}\mu}}, \quad  N = c_{2.2} \, \frac{\sqrt{\sigma}}{2\xi} \log \frac{1}{\eta}, \quad  r = \frac{16 \sqrt{n}}{\eta}, \quad  \Lambda = \frac{1}{10^4} \, \frac{1}{n^{1+128c_{3.3}c_{3.10}\mu}}, $$ where $C = \frac{1}{128c_{3.3}c_{3.10}}$ and the constants $c_{2.2}, c_{3.3}, c_{3.5}$ and $c_{3.10}$ are defined in Theorem \ref{thm:dotproduct} and Lemmas \ref{lem:eigenvalue_gap}, \ref{lem:small_norm}, and \ref{lem:sparse_cuts} respectively. Then,
\begin{enumerate}
\item \textbf{Cluster-Test} with the parameters defined above accepts every $(2, \phi)$-clusterable graph $G$ with probability at least $\frac{2}3$.
\item \textbf{Cluster-Test} with the parameters defined above rejects every graph $G$ that is $\epsilon$-far from $(2, \phi^*)$-clusterable for any $\phi^* \le \mu \phi^2 \epsilon^2$ with probability at least $\frac{2}3$.
\end{enumerate}
Furthermore, the query complexity of $\textbf{Cluster-Test}$ is $O(n^{1/2+O(1)\mu} \cdot \textup{poly}(1/\epsilon, 1/\phi,\log n))$.
\end{theorem}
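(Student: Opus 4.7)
The plan is to prove Theorem \ref{thm:main} in three stages mirroring its three claims: completeness (accept $(2,\phi)$-clusterable graphs), soundness (reject graphs $\epsilon$-far from $(2,\mu\phi^2\epsilon^2)$-clusterable), and query complexity. The underlying strategy is to compare the random Gram matrix $\A_{u,v}$ formed in each round to the ``ideal'' matrix one would see with exact inner products of $\q_u^t, \q_v^t$, and then apply Weyl's inequality (Proposition \ref{thm:weyl}) to transfer eigenvalue bounds from $\A_{u,v}$ to the algorithm's estimate $\tilde{\A}_{u,v}$.

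For completeness, I would first invoke Lemma \ref{lem:eigenvalue_gap} (the spectral gap lemma hinted at in the overview): for $(2,\phi)$-clusterable $G$, the third eigenvalue $\nu_3$ of $\M$ is so small that $(1-\lambda_3/2)^{2t}$ is negligible when $t = \Theta(\log n/\phi^2)$. Plugging this into Eq.\eqref{eq:q_expansion} shows $\q_u^t$ is essentially a scalar multiple of $\V_2$ for every $u$, so the exact matrix $\A_{u,v}$ is close to a rank-$1$ matrix and hence has a very small eigenvalue. In parallel, Lemma \ref{lem:small_norm} ensures $\|\p_u^t\|^2 \le \sigma/4$ for every $u$, so $l_2\textbf{-Norm-Tester}$ accepts with high probability. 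Applying the inner product estimator of Theorem \ref{thm:dotproduct} to approximate each entry to additive $\xi$, Weyl's inequality gives that the smaller eigenvalue of $\tilde{\A}_{u,v}$ is at most $\Lambda$, so step 6 does not reject. A union bound over the $R$ rounds (calibrated by $\eta = 1/(24R)$) finishes the argument.

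For soundness, I would argue contrapositively: if the algorithm accepts with probability above $1/3$, then most pairs $(u,v)$ fail to produce $\tilde{\A}_{u,v}$ with both eigenvalues above $\Lambda$. Using Weyl's inequality in reverse, this means the exact $\A_{u,v}$ typically has an eigenvalue below $2\Lambda$, and hence $\q_u^t$ and $\q_v^t$ nearly lie on a common line. By Lemma \ref{lem:sparse_cuts}, when $G$ is $\epsilon$-far from $(2,\phi^\ast)$-clusterable with $\phi^\ast \le \mu\phi^2\epsilon^2$, the eigenvalue $\lambda_3$ satisfies a lower bound forcing $(1-\lambda_3/2)^{2t}$ to be at least $n^{-O(\mu)}$; combined with the preceding near-collinearity of all $\{\q_u^t\}$ one derives a contradiction, namely that $G$ admits a $(2,\phi^\ast)$-clustering. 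The choice $R = 10^{22}/\epsilon^4$ gives enough repetitions so that the constant fraction of ``bad'' pairs translates into rejection with probability at least $2/3$.

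The query complexity is a straightforward accounting: each of the $R$ rounds runs $N$ walks of length $t$ from two vertices, so the total number of oracle queries is $O(RNt)$. Substituting $\sigma = \Theta(R/n)$ and $\xi = n^{-1-128c_{3.3}c_{3.10}\mu}$ into the expression for $N$ gives $\sqrt{\sigma}/\xi = n^{1/2+O(\mu)}\cdot\mathrm{poly}(1/\epsilon)$, and $t = O(\log n/\phi^2)$ contributes the remaining polylogarithmic factor, yielding the claimed bound. The main obstacle is clearly the soundness direction: even granted Lemma \ref{lem:sparse_cuts}, one must rule out the possibility that although $\M - \frac{1}{n}\mathbf{J}$ has effective rank larger than one, the vectors $\q_u^t$ still cluster along a single line for most $u$. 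This is precisely where the geometric argument advertised in the introduction (showing that $\{\q_u^t : u \in V\}$ genuinely spans a nontrivial subspace and that a random pair $(u,v)$ is well-conditioned within it) must do the real work.
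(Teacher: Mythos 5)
Your completeness plan is accurate and matches the paper: invoke the spectral gap of Lemma~\ref{lem:eigenvalue_gap} to show each $\q_u^t$ collapses onto the span of $\V_2$ (Lemma~\ref{lem:norm_difference}), conclude via Lemma~\ref{lem:eigsofA} that the exact Gram matrix $\A_{u,v}$ has a tiny eigenvalue, control the estimator error with Lemma~\ref{lem:small_norm} and Theorem~\ref{thm:dotproduct}, and finish with Weyl's inequality and a union bound. The query-complexity accounting is also correct.

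The soundness direction, however, contains a genuine gap that you correctly identify but do not resolve. Lemma~\ref{lem:sparse_cuts} gives three large vertex subsets $S_1,S_2,S_3$ separated by cuts of conductance $O(\phi^*\epsilon^{-2})$; it does not directly give a ``lower bound on $\lambda_3$'' as you write, and even if you extract from it that $\M-\tfrac1n\mathbf{J}$ has a second large eigenvalue, that alone is not sufficient. The algorithm inspects eigenvalues of $2\times 2$ Gram submatrices $\A_{u,v}$, so the soundness argument must produce a \emph{constant fraction} of pairs $(u,v)$ with $\q_u^t$ and $\q_v^t$ quantitatively far from collinear; it is entirely possible for $\M-\tfrac1n\mathbf{J}$ to have two large eigenvalues while almost all of the individual vectors $\q_u^t$ still hug a single line. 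Bridging this gap is the technical heart of the paper: Lemmas~\ref{lem:large_norm} and \ref{lem:large_norm2} show that the \emph{aggregate} projected vectors $\Pi\q_{S_1}^0,\Pi\q_{S_2}^0$ are far from collinear; Lemma~\ref{lem:many_pairs1} converts this by pigeonhole into many pairs that are far from antipodal (and separately many far from podal); the delicate convex-geometry case analysis of Lemma~\ref{lem:supplementary} is then needed to upgrade ``far from antipodal \emph{or} podal'' to ``far from both,'' yielding Lemma~\ref{lem:many_pairs2}; and Lemma~\ref{lem:many_repr_pairs} transfers the lower bound through the spectral projection $\Pi$ to the actual walk vectors $\q_u^t$. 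Your contrapositive framing (``near-collinearity of all $\q_u^t$ would force a $(2,\phi^*)$-clustering'') is also not how the paper proceeds and would itself require a nontrivial reverse Cheeger-type argument that you have not supplied. As submitted, the proposal proves completeness and defers soundness to exactly the lemmas you have not reconstructed.
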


\section{Proof of Main Theorem \label{sec:proof}} 
\subsection{Completeness: accepting \texorpdfstring{$(2, \phi)$}{Lg}-clusterable graphs} \label{sec:completeness}
In this section we show that $\textbf{Cluster-Test}$ with the parameters defined in Theorem \ref{thm:main} accepts $G$ with probability greater than $ \frac{2}3$ if $G$ is $(2, \phi)$-clusterable. We first introduce the main geometric property of our paper.
\begin{definition} \label{defn:epsilon_close}
\normalfont Vectors $\textbf{a}$ and $\textbf{b}$ are \emph{$\epsilon$-close to collinear} if they can be moved $l_2$ distance at most $\epsilon$ to lie on a line through the origin. Vectors $\textbf{a}$ and $\textbf{b}$ are \emph{$\epsilon$-far from collinear} if they are not $\epsilon$-close to collinear. See Figure \ref{fig:antipodal} for reference.
\end{definition}
Let $G$ be a $(2, \phi)$-clusterable graph. We show that \textbf{Cluster-Test} accepts $G$ with probability at least $\frac{2}3$ using the following argument.
\begin{itemize}
    \item First in Lemma \ref{lem:eigsofA} we first show that how close $\q_u^t$ and $\q_v^t$ are to collinear corresponds to how small the eigenvalues of $\A_{u,v}$ are.
    \item We show in Lemma \ref{lem:norm_difference} that any pair of vectors $\q_u^t$ and $\q_v^t$, where $u,v$ are vertices of $G$, are very close to collinear. This relies on a result about the eigenvalues of $\M$ from \cite{k_clusterability} which is restated in Lemma \ref{lem:eigenvalue_gap}.
    \item We finally show that this implies that $\textbf{Cluster-Test}$ accepts $G$ with probability greater than $\frac{2}3$ in Lemma \ref{lem:algproof_completeness}.
\end{itemize}
\begin{lemma} \label{lem:eigsofA}
If $\q_u^t$ and $\q_v^t$ are $\epsilon$-close to collinear then the smallest eigenvalue of $\A_{u,v}$ is less than $10\epsilon$. Conversely, if $\q_u^t$ and $\q_v^t$ are $\epsilon$-far from collinear then both the eigenvalues of $\A_{u,v}$ are larger than $\epsilon^2$.
\end{lemma}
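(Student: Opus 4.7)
The plan is to identify $\A_{u,v}$ as the Gram matrix $M^\top M$ of the $n\times 2$ matrix $M$ whose columns are $\q_u^t$ and $\q_v^t$. The two eigenvalues of $\A_{u,v}$ are then the squared singular values of $M$; in particular $\lambda_{\min}(\A_{u,v}) = \sigma_2(M)^2$. The geometric content of the lemma is that an $n\times 2$ matrix has rank at most $1$ exactly when its two columns lie on a common line through the origin, so by the Eckart--Young theorem $\sigma_2(M)$ equals the Frobenius distance from $M$ to the nearest rank-$1$ matrix, i.e.\ to the nearest collinear pair of columns.

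For the first direction, assume $\q_u^t,\q_v^t$ are $\epsilon$-close to collinear and pick witnesses $\mathbf{a}',\mathbf{b}'$ on a line through the origin with $\norm{\q_u^t-\mathbf{a}'},\norm{\q_v^t-\mathbf{b}'}\le\epsilon$. Letting $N$ be the rank-$1$ matrix with columns $\mathbf{a}',\mathbf{b}'$ gives $\|M-N\|_F^2\le 2\epsilon^2$, so $\lambda_{\min}(\A_{u,v})=\sigma_2(M)^2\le 2\epsilon^2$. Since $\norm{\p_u^t}_2\le\norm{\p_u^t}_1=1$ together with $\q_u^t=\p_u^t-\tfrac{1}{n}\One$ yields $\norm{\q_u^t}\le 1$, the trace bound $\lambda_{\min}\le\norm{\q_u^t}^2+\norm{\q_v^t}^2\le 2$ already gives $\lambda_{\min}<10\epsilon$ once $\epsilon>1/5$; and for $\epsilon\le 5$ one has $2\epsilon^2\le 10\epsilon$ directly, so in every regime $\lambda_{\min}(\A_{u,v})<10\epsilon$.

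For the converse I argue the contrapositive: suppose $\lambda_{\min}(\A_{u,v})\le\epsilon^2$, so $\sigma_2(M)\le\epsilon$. The Eckart--Young theorem (applied via the singular value decomposition of $M$) then produces a rank-$1$ matrix $N$ with $\|M-N\|_F\le\epsilon$. Writing its two columns as $\mathbf{a}',\mathbf{b}'$, these vectors lie on a common line through the origin, and each column-wise error $\norm{\q_u^t-\mathbf{a}'}$, $\norm{\q_v^t-\mathbf{b}'}$ is bounded by the Frobenius distance $\|M-N\|_F\le\epsilon$. Hence $\q_u^t,\q_v^t$ are $\epsilon$-close to collinear, which is exactly the contrapositive of the claim that $\epsilon$-far from collinear forces both eigenvalues above $\epsilon^2$.

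The proof is mostly a translation exercise, so the only real obstacle is bookkeeping around Definition~\ref{defn:epsilon_close}: one has to settle whether ``moved $l_2$ distance at most $\epsilon$'' refers to each vector individually, to the pair in the product metric, or to the total of the two displacements. Under any reasonable convention the argument above goes through with at most a $\sqrt{2}$ factor, comfortably absorbed by the slack in the constants $10\epsilon$ and $\epsilon^2$ in the statement.
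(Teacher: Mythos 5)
Your proof is correct, and it follows a genuinely different route from the paper's. The paper proves the first direction by invoking Weyl's inequality on the perturbation $(\M_1+\mathbf{E})^T(\M_1+\mathbf{E}) - \M_1^T\M_1$ and bounding its Frobenius norm term by term, arriving at $4\sqrt{2}\epsilon + 2\epsilon^2 \le 10\epsilon$. For the converse the paper diagonalizes the Gram matrix as $\A_{u,v}=\kappa_1^2\mathbf{w}_1\mathbf{w}_1^T+\kappa_2^2\mathbf{w}_2\mathbf{w}_2^T$, forms the $2\times 2$ square root $\M_2$, perturbs it to a rank-$1$ matrix by killing the $\kappa_2$ term, and then uses the linear isometry $\mathbf{U}=\M_1\M_2^{-1}$ to transport the rank-$1$ approximation in $\mathbb{R}^2$ back to the columns in $\mathbb{R}^n$. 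You instead observe that the whole lemma is just the Eckart--Young characterization in disguise: $\lambda_{\min}(\A_{u,v})=\sigma_2(M)^2$ is exactly the squared Frobenius distance from $M$ to the nearest rank-$1$ matrix, and a rank-$1$ $n\times 2$ matrix is precisely a pair of collinear columns. This makes both directions essentially one-liners, with a tighter bound ($2\epsilon^2$ rather than $4\sqrt{2}\epsilon+2\epsilon^2$) in the first direction, and eliminates the $\M_1\M_2^{-1}$ isometry construction entirely. The paper's version is more elementary in that it only cites Weyl's inequality, which the authors already state as a black box; your version trades that for a citation of Eckart--Young. Both are complete and correct, and the constants in the statement comfortably absorb the minor bookkeeping differences in how per-column versus joint displacement is accounted, as you note.
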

\begin{proof}
Write $\M_1 = \begin{bmatrix}
       \q_u^t    &  \q_v^t 
\end{bmatrix}$,  the matrix with columns $\q_u^t$ and $\q_v^t$. Then $\A_{u,v} = \M_1^T\M_1$. Because $\A_{u,v}$ is positive semidefinite, we can also write
$$\A_{u,v} =  \kappa_1^2 \textbf{w}_1 \textbf{w}_1^T + \kappa_2^2 \textbf{w}_2 \textbf{w}_2^T$$ where  $\textbf{w}_1, \textbf{w}_2 \in \mathbb{R}^2$ are orthonormal and $\kappa_1, \kappa_2 \ge 0$.
Suppose $\q_u^t$ and $\q_v^t$ are $\epsilon$-close to collinear. An equivalent formulation of Definition \ref{defn:epsilon_close} is that there exists $\e_u, \e_v$ such that $\norm{\e_u}, \norm{\e_v} \le \epsilon$ and $\q_u^t+\e_u$ and $\q_v^t+\e_v$  lie on a line through the origin. This implies that the matrix 
$\mathbf{E} =  \begin{bmatrix}
       \e_u    &  \e_v 
\end{bmatrix} $
with columns $\e_u, \e_v \in \mathbb{R}^n$ is such that $\M_1+\mathbf{E}$ is rank $1$. Therefore, $(\M_1 + \mathbf{E})^T(\M_1 + \mathbf{E})$ is also a rank $1$ matrix so it has a zero eigenvalue. Because
$$ (\M_1 + \mathbf{E})^T(\M_1 + \mathbf{E}) = \A_{u,v} + \mathbf{E}^T\M_1 + \M_1^T\mathbf{E} + \mathbf{E}^T\mathbf{E},$$ 
we know by Weyl's inequality that $\A_{u,v}$ has an eigenvalue less than 
$$\norm{\mathbf{E}^T\M_1 + \mathbf{E}\M_1^T + \mathbf{E}^T\mathbf{E}}_F \le 2\norm{\mathbf{E}^T\M_1}_F + \norm{ \mathbf{E}^T\mathbf{E}}_F$$
where $\norm{\cdot}_F$ denotes the Frobenius norm. Because $\norm{\q_u}, \norm{\q_v} \le 1$, we can easily compute that 
\begin{align*}
    \norm{\mathbf{E}^T\M_1}_F &\le \sqrt{2}(\norm{\e_u} + \norm{\e_v}) \\
    \norm{ \mathbf{E}^T\mathbf{E}}_F &\le \norm{\e_u}^2 + \norm{\e_v}^2.
\end{align*}
Therefore, 
$$2\norm{\mathbf{E}^T\M_1}_F + \norm{ \mathbf{E}^T\mathbf{E}}_F  \le 4\sqrt{2} \epsilon + 2 \epsilon^2 \le 10 \epsilon$$
for $\epsilon \le 1$ which proves the first part of our lemma. 

For the second part, we prove the contrapositive. Suppose that $\kappa_2 < \epsilon$. We wish to show that $\q_u^t$ and $\q_v^t$ are $\epsilon$-close to collinear. Define
$$\M_2 := \kappa_1 \textbf{w}_1 \textbf{w}_1^T + \kappa_2 \textbf{w}_2\textbf{w}_2^T.
$$ 
Let $\mathbf{E} = -\kappa_2 \mathbf{w}_2 \mathbf{w}_2^T$ and denote the columns of $\mathbf{E}$ as $\e_u', \e_v' \in \mathbb{R}^2$. Then $\M_2 + \mathbf{E}$ is rank $1$ and $\norm{\e_u'}, \norm{\e_v'} \le \kappa_2 < \epsilon$. By the orthogonality of $\textbf{w}_1$ and $\textbf{w}_2$,  we have that $\M_1^T\M_1 = \M_2^T \M_2$. Therefore, the matrix $\mathbf{U} = \M_1\M_2^{-1}$ satisfies $\mathbf{U}^T\mathbf{U}=\mathbf{I}$, implying that for any vector $\mathbf{x}\in\mathbb{R}^2$ the equation $\norm{\mathbf{U}\mathbf{x}}=\norm{\mathbf{x}}$ is satisfied. We note that 
$$ \mathbf{U}(\M_2 + \mathbf{E}) = \M_1 + \mathbf{U} \mathbf{E} $$
is also a rank $1$ matrix and we define the columns of $\mathbf{U} \mathbf{E}$ to be $\e_u , \e_v \in \mathbb{R}^n$. Because $\mathbf{U}$ preserves lengths, $\norm{\e_u}, \norm{\e_v} < \kappa_2\le\epsilon$. Thus $\q_u^t, \q_v^t$ are $\epsilon$-close to collinear, as desired.
\end{proof}

We proceed to show that if the input graph is $(2, \phi)$-clusterable then for any pair of vertices $(u,v)$, $\q_u^t$ and $\q_v^t$ are $(1-O(1)\phi^2)^{t}$-close to collinear. To show this, we need the following lemma from \cite{k_clusterability} which relates the property of $(2,\phi)$-clusterable to the eigenvalues of the Laplacian matrix. 

\begin{lemma}[Lemma $5.2$ in \cite{k_clusterability}] \label{lem:eigenvalue_gap}
There exists a constant $c_{3.3}$ depending on $d$ such that for $G$ a $(2,\phi)$-clusterable graph of maximum degree at most $d$, $\lambda_i \ge \frac{\phi^2}{16c_{3.3}}$ for $i \ge 3$ where $\lambda_i$ is the $i$-th smallest eigenvalue of the Laplacian matrix of $G$.
\end{lemma}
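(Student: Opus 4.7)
The plan is to prove this via the variational (Courant--Fischer) characterization of $\lambda_3$ together with a cut-and-paste argument that reduces to the ordinary Cheeger inequality applied separately to each cluster $G[C_i]$.

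By the min-max theorem, $\lambda_3 = \max_{T} \min_{x \perp T,\, x\neq 0} \frac{x^T \Lp x}{x^T x}$ where the outer maximum runs over $2$-dimensional subspaces $T$. The natural candidate is $T = \mathrm{span}(\One_{C_1}, \One_{C_2})$, where $\One_{C_i}$ is the indicator vector of cluster $C_i$. Any $x \perp T$ decomposes as $x = x_1 + x_2$, with $x_i$ supported on $C_i$ and having mean zero on $C_i$. Since $\Lp = (D-A)/d$, we have $x^T\Lp x = \tfrac{1}{d}\sum_{(u,v)\in E}(x(u)-x(v))^2$. Splitting this sum into edges internal to $C_1$, internal to $C_2$, and across the cut, and discarding the nonnegative cross-cut term, yields
\[
x^T \Lp x \;\ge\; x_1^T \Lp_{C_1} x_1 + x_2^T \Lp_{C_2} x_2,
\]
where $\Lp_{C_i} = (D_{G[C_i]} - A_{G[C_i]})/d$ is the Laplacian of the induced subgraph, still normalized by the global parameter $d$.

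Each $G[C_i]$ has maximum degree at most $d$ and, by hypothesis, inner conductance at least $\phi$. An appropriate form of the standard Cheeger inequality (matched to the paper's convention of normalizing cuts by $d|S|$) then gives that the second smallest eigenvalue of $\Lp_{C_i}$ is at least $C_d \phi^2$ for some constant $C_d$ depending only on $d$. The constant vector on $C_i$ spans the nullspace of $\Lp_{C_i}$, and $x_i$ is orthogonal to it, so $x_i^T \Lp_{C_i} x_i \ge C_d \phi^2 \|x_i\|^2$. Summing and using $\|x\|^2 = \|x_1\|^2 + \|x_2\|^2$ gives $x^T \Lp x \ge C_d \phi^2 \|x\|^2$, which by Courant--Fischer yields $\lambda_3 \ge C_d \phi^2$; setting $c_{3.3} = 1/(16 C_d)$ gives the stated inequality, and since this bound only gets better for higher indices we also get $\lambda_i \ge \phi^2/(16 c_{3.3})$ for all $i \ge 3$.

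The main obstacle is the last technical point: making sure the version of Cheeger's inequality invoked on each $G[C_i]$ is truly compatible with the paper's definition of inner conductance, which divides by $d|S|$ rather than $\mathrm{vol}(S)$. This ultimately follows from a sweep-cut argument on the Fiedler vector of $\Lp_{C_i}$, with the inequality $\mathrm{vol}(S) \le d|S|$ absorbing the difference into the constant $C_d$; the only real care required is to confirm that a constant factor depending only on $d$ suffices, which is why $c_{3.3}$ in the statement is allowed to depend on $d$.
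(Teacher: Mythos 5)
The paper does not prove this lemma; it is imported verbatim as Lemma~5.2 of Czumaj--Peng--Sohler, so there is no in-paper argument to compare against. Your reconstruction via Courant--Fischer on the test space $T = \mathrm{span}(\One_{C_1},\One_{C_2})$, followed by cluster-wise Cheeger, is correct and is the standard route to such an eigenvalue bound from a clustering hypothesis.

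Two remarks to tighten the Cheeger step, which you flag as the delicate point. First, the normalization issue can be resolved exactly rather than up to a $d$-dependent fudge: add $d-\deg_{G[C_i]}(v)$ self-loops at each vertex of $G[C_i]$. This leaves $D_{G[C_i]}-A_{G[C_i]}$, and hence $\Lp_{C_i}=\tfrac{1}{d}(D_{G[C_i]}-A_{G[C_i]})$, completely unchanged, makes the augmented graph $d$-regular so that $\Lp_{C_i}$ is literally its normalized Laplacian, and makes the paper's $d|S|$-normalized conductance coincide with standard volume-normalized conductance (self-loops never cross a cut and contribute exactly $d|S|-\mathrm{vol}(S)$ to the volume). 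The usual Cheeger inequality then yields $\lambda_2(\Lp_{C_i}) \ge \phi^2/2$ with the constant $1/2$ universal, so $c_{3.3}$ need not actually depend on $d$ here. Second, your argument implicitly uses that the nullspace of $\Lp_{C_i}$ is one-dimensional, spanned by $\One_{C_i}$; this requires $G[C_i]$ connected, which does follow from $\phi(G[C_i])\ge\phi>0$, but it is worth saying. You should also cover the degenerate branch of Definition~1.1 in which no bipartition is taken and instead $\phi(G)\ge\phi$; there $\lambda_3\ge\lambda_2\ge\phi^2/2$ by the same Cheeger bound applied once.
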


We note here that there is a short proof that $\A_{u,v}$ has at most one large eigenvalue if $G$ is $(2, \phi)$-clusterable. Lemma \ref{lem:eigenvalue_gap} states that $\M - \frac{1}n \mathbf{J}$ has at most one large eigenvalue, hence $(\M - \frac{1}n \mathbf{J})^t = \M^t - \frac{1}n \mathbf{J}$ also has at most one large eigenvalue. Then the Cauchy interlacing theorem implies that all the minors $\A_{u,v}$ also have at most one large eigenvalue. However, we present this longer proof that uses the definition of $\epsilon$-close to highlight the similarities between the proofs of the soundness and completeness case.

We now show that given Lemma \ref{lem:eigenvalue_gap}, it follows that $\q_u^t$ is close to the line spanned by $\V_2$.

\begin{lemma} \label{lem:norm_difference} 
Let $G$ be $(2, \phi)$-clusterable. Then for any pair of vertices $(u,v)$,  $\q_u^t$ and $\q_v^t$ are $\left(1-\frac{\phi^2}{32c_{3.3}} \right)^t$-close to collinear where $c_{3.3}$ is a constant defined in Lemma $\ref{lem:eigenvalue_gap}$.
\end{lemma}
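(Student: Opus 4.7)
The plan is to use the spectral decomposition of $\q_u^t$ given in Eq.~\eqref{eq:q_expansion} together with the eigenvalue gap from Lemma~\ref{lem:eigenvalue_gap}. The expansion
\[
\q_u^t = \V_2(u)\left(1-\frac{\lambda_2}{2}\right)^{\!t}\V_2 \;+\; \sum_{i\ge 3}\V_i(u)\left(1-\frac{\lambda_i}{2}\right)^{\!t}\V_i
\]
already splits $\q_u^t$ into a component on the line spanned by $\V_2$ and a ``tail'' component that I want to show is small. Thus the natural candidate for the perturbation $\e_u$ in Definition~\ref{defn:epsilon_close} is the negative of the tail, so that $\q_u^t + \e_u$ is a scalar multiple of $\V_2$, and likewise for $\e_v$. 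Both corrected vectors then lie on the one-dimensional line $\mathrm{span}(\V_2)$ through the origin, so to establish $\epsilon$-closeness to collinearity it suffices to bound $\norm{\e_u}$ and $\norm{\e_v}$.

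Now I estimate $\norm{\e_u}^2$. By orthonormality of the $\V_i$,
\[
\norm{\e_u}^2 = \sum_{i\ge 3}\V_i(u)^2\left(1-\frac{\lambda_i}{2}\right)^{\!2t}.
\]
Lemma~\ref{lem:eigenvalue_gap} gives $\lambda_i \ge \phi^2/(16 c_{3.3})$ for $i \ge 3$, so $1 - \lambda_i/2 \le 1 - \phi^2/(32 c_{3.3})$ for each such $i$. Pulling this uniform bound out of the sum yields
\[
\norm{\e_u}^2 \le \left(1-\frac{\phi^2}{32 c_{3.3}}\right)^{\!2t} \sum_{i\ge 3}\V_i(u)^2 \le \left(1-\frac{\phi^2}{32 c_{3.3}}\right)^{\!2t},
\]
where the last inequality uses Parseval ($\sum_{i=1}^n \V_i(u)^2 = \norm{\One_u}^2 = 1$). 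Taking square roots gives $\norm{\e_u} \le (1-\phi^2/(32c_{3.3}))^t$, and exactly the same argument applied to $v$ gives the same bound on $\norm{\e_v}$. This is all the proof really needs.

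There is no serious obstacle: the only mild subtlety is keeping the eigenvalue-to-perturbation conversion tight enough to land at the exact constant $\phi^2/(32 c_{3.3})$, which comes from $\lambda_i/2 \ge \phi^2/(32 c_{3.3})$ rather than any more clever estimate. Once the tail bound is in hand, the collinearity claim is immediate because both $\q_u^t + \e_u$ and $\q_v^t + \e_v$ are scalar multiples of the single vector $\V_2$, so they certainly lie on a common line through the origin. Assembling these pieces gives the stated $(1-\phi^2/(32c_{3.3}))^t$-closeness to collinearity.
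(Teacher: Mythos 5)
Your proof is correct and takes essentially the same route as the paper's: both decompose $\q_u^t$ via Eq.~\eqref{eq:q_expansion}, take the perturbation to be the tail $\sum_{i\ge 3}\V_i(u)(1-\lambda_i/2)^t\V_i$, bound its norm using Lemma~\ref{lem:eigenvalue_gap} together with Parseval, and observe that the corrected vectors lie on $\mathrm{span}(\V_2)$. If anything yours is slightly cleaner, since the paper's displayed quantity $\norm{\q_u^t - \V_2}^2$ is really the norm of the tail (the projection orthogonal to $\V_2$), which is exactly what your explicit choice of $\e_u$ makes precise.
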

\begin{proof}
Recall that $\q_u^t = \p_u^t - \frac{1}n \One$ where $\p_u^t$ is the probability distribution of the endpoint of a length $t$ lazy random walk starting at vertex $u$. Writing $\q_u^t$ in the eigenbasis of $\M$ gives us
$$\q_u^t = \p_u^t - \frac{1}n \One = \sum_{i \ge 2}\V_i(u)\left(1 - \frac{\lambda_i}{2} \right)^t \V_i.$$ Therefore, 
\begin{align*}
    \norm{\q_u^t - \V_2}^2 &= \sum_{i=3}^n \left( 1- \frac{\lambda_i}2 \right)^{2t} \V_i(u)^2  \\
    &\le  \left(1 - \frac{\lambda_3}2 \right)^{2t} \sum_{i=3}^n \V_i(u)^2 \\
    &\le \left (1-  \frac{\phi^2 }{32c_{3.3}}\right )^{2t}.
\end{align*}
It follows that for any vertices $u$ and $v$, $\q_u^t$ and $\q_v^t$ are $\left (1-  \frac{\phi^2 }{32c_{3.3}}\right )^{t}$-close to collinear. 
\end{proof}

Lemmas \ref{lem:eigsofA} and \ref{lem:norm_difference} together guarantee that both of the eigenvalues of $\A_{u,v}$ cannot be large. We now want to show that this also holds when the \textbf{Cluster-Test} approximates $\A_{u,v}$. We need the following lemma which tells us that $l_2\textbf{-Norm-Tester}$ accepts $\p_u^t, \p_v^t$ with high probability in step $4$ of \textbf{Cluster-Test}. This lemma is just a technicality that we need for the query complexity of Theorem \ref{thm:dotproduct}.
\begin{lemma}[Lemma $4.3$ in \cite{k_clusterability}] \label{lem:small_norm}
Let $0 < \gamma < 1$. There exists a constant $c_{3.5}$ such that for $G$ a $(2, \phi)$-clusterable graph, there exists $V' \subseteq V$ with $|V'| \ge (1-\gamma)|V|$ such that for any $u \in V'$ and any $t > \frac{c_{3.5}\log n}{\phi^2}$,   the following holds:
$$ \norm{\p_u^t}^2 \le \frac{4}{\gamma n}. $$
\end{lemma}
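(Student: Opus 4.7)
The plan is to expand $\p_u^t$ in the orthonormal eigenbasis $\V_1, \ldots, \V_n$ of $\M$ and carefully bound each of the three natural pieces: the constant ($i=1$) part, the ``bad'' part coming from the small eigenvalue $\lambda_2$ (which we cannot tame by spectral decay since $\lambda_2$ may be essentially $0$ in the clusterable case), and the tail $i\ge 3$ which does decay rapidly by Lemma \ref{lem:eigenvalue_gap}. Concretely, using Eq.\@ \eqref{eq:p_expansion} and orthonormality,
\begin{equation*}
\norm{\p_u^t}^2 \;=\; \sum_{i=1}^n \V_i(u)^2\left(1-\tfrac{\lambda_i}{2}\right)^{2t} \;=\; \frac{1}{n} \;+\; \V_2(u)^2\left(1-\tfrac{\lambda_2}{2}\right)^{2t} \;+\; \sum_{i\ge 3}\V_i(u)^2\left(1-\tfrac{\lambda_i}{2}\right)^{2t},
\end{equation*}
using $\V_1=\tfrac{1}{\sqrt{n}}\One$ for the first term.

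The easy part is the tail $i \ge 3$. By Lemma \ref{lem:eigenvalue_gap}, $\lambda_i \ge \phi^2/(16 c_{3.3})$ for $i\ge 3$, so $(1-\lambda_i/2)^{2t} \le (1-\phi^2/(32c_{3.3}))^{2t} \le \exp(-t\phi^2/(16c_{3.3}))$. Choosing $c_{3.5}$ to be, say, $32 c_{3.3}$ makes this bound at most $1/n^2$ for all $t > c_{3.5}\log n/\phi^2$. Since $\sum_{i\ge 3}\V_i(u)^2 \le 1$ for every $u$, the entire tail contributes at most $1/n^2$, which is negligible compared to the target $4/(\gamma n)$.

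The heart of the proof is the $i=2$ term, where $(1-\lambda_2/2)^{2t}$ may be essentially $1$. Here I would use the unit-norm condition $\sum_{u\in V}\V_2(u)^2 = 1$ together with a Markov-style counting argument: the set $B = \{u : \V_2(u)^2 > 1/(\gamma n)\}$ satisfies $|B|\cdot \tfrac{1}{\gamma n} < \sum_{u\in B}\V_2(u)^2 \le 1$, hence $|B| < \gamma n$. Defining $V' = V \setminus B$ gives $|V'| \ge (1-\gamma)n$, and for every $u \in V'$ the $i=2$ contribution is at most $\V_2(u)^2 \le 1/(\gamma n)$ since the spectral factor is bounded by $1$.

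Combining the three pieces, for $u \in V'$ and $t > c_{3.5}\log n/\phi^2$,
\begin{equation*}
\norm{\p_u^t}^2 \;\le\; \frac{1}{n} + \frac{1}{\gamma n} + \frac{1}{n^2} \;\le\; \frac{4}{\gamma n},
\end{equation*}
where the last inequality holds comfortably since $\gamma<1$ (and for $n\ge 1$). The main conceptual obstacle is precisely why the second eigenvector forces us to discard a $\gamma$-fraction of vertices: in a $(2,\phi)$-clusterable graph, $\V_2$ is typically close to a $\pm c$ indicator of a cluster, so most entries are $O(1/\sqrt{n})$ and only a vanishing fraction can be much larger; the Markov-type argument captures this cleanly without needing to know anything further about the structure of $\V_2$.
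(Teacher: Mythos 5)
Your proof is correct. Note that the paper itself does not prove this lemma --- it is cited verbatim from Czumaj, Peng, and Sohler's Lemma~4.3 --- so there is no in-paper proof to compare against; however, your argument is exactly the standard spectral approach that underlies the cited result, specialized to $k=2$. The decomposition into the $i=1$ term (exactly $1/n$ since $\lambda_1 = 0$ and $\V_1 = \One/\sqrt n$), the $i=2$ term (controlled by a Markov/counting argument on $\sum_u \V_2(u)^2 = 1$, which is where the exceptional set $B$ of size $< \gamma n$ comes from), and the tail $i\ge 3$ (controlled by Lemma~\ref{lem:eigenvalue_gap} and exponential decay with $c_{3.5}=32c_{3.3}$) is the right split, and each bound is verified: the tail uses $\sum_{i\ge 3}\V_i(u)^2 \le \sum_{i=1}^n \V_i(u)^2 = \norm{\One_u}^2 = 1$, and the final sum $1/n + 1/(\gamma n) + 1/n^2 \le 4/(\gamma n)$ holds since $\gamma < 1$. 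The self-contained observation that the second eigenvector is the genuine obstruction --- unlike the tail it cannot be damped by the random-walk length, so a vertex-discarding argument is necessary --- is the correct conceptual reading of why $V'$ must appear in the statement.
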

We now prove that $\textbf{Cluster-Test}$ with the parameters defined in Theorem \ref{thm:main} passes the completeness case.
\begin{lemma} \label{lem:algproof_completeness}
\textbf{Cluster-Test} with the parameters defined in Theorem \ref{thm:main} accepts $(2, \phi)$-clusterable graphs with probability greater than $ \frac{2}3$.
\end{lemma}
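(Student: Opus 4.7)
My strategy is to bound the probability of rejection in a single round of \textbf{Cluster-Test} by $O(\eta)$ and then union-bound over the $R$ rounds. Since $\eta = 1/(24R)$, a per-round failure bound of $7\eta$ gives a total rejection probability of at most $7R\eta = 7/24 < 1/3$, which is exactly the acceptance bound we want.

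Fix a round. Rejection can only happen in step 4 (the norm test) or step 6 (the eigenvalue test); I would bound each separately. For step 4, I apply Lemma \ref{lem:small_norm} with $\gamma = \eta$ to get a set $V' \subseteq V$ with $|V'| \ge (1-\eta)n$ on which $\norm{\p_w^t}^2 \le 4/(\eta n) = \sigma/4$; this is valid because $t \ge c_{3.5}\log n/\phi^2$. Since $(u,v)$ is uniform, both endpoints lie in $V'$ except with probability $2\eta$, and conditional on that, Theorem \ref{thm:l2norm} ensures each of the two norm-tester calls accepts with probability at least $1-\eta$. So step 4 contributes at most $4\eta$ to the per-round failure probability, and it also guarantees the hypothesis $b \ge \max(\norm{\p_u^t}^2,\norm{\p_v^t}^2)$ needed to invoke Theorem \ref{thm:dotproduct} in step 5.

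For step 6, the key chain is: near-collinearity of $\q_u^t$ and $\q_v^t$ forces $\A_{u,v}$ to have a small eigenvalue, and this survives the entrywise approximation to $\tilde{\A}_{u,v}$. Concretely, Lemma \ref{lem:norm_difference} shows $\q_u^t$ and $\q_v^t$ are $\bigl(1-\phi^2/(32c_{3.3})\bigr)^t$-close to collinear; the choice $t \ge 64 c_{3.3}\log n/\phi^2$ makes this at most $e^{-2\log n} = n^{-2}$. Lemma \ref{lem:eigsofA} then says the smallest eigenvalue of $\A_{u,v}$ is at most $10/n^2$. Conditional on $u,v \in V'$, Theorem \ref{thm:dotproduct} implies that each of the three distinct entries of $\tilde{\A}_{u,v}$ is within $\xi$ of the true value with probability at least $1-\eta$, hence all three with probability at least $1-3\eta$; on this event the Frobenius-norm error is at most $2\xi$, and Proposition \ref{thm:weyl} (Weyl) then gives that $\tilde{\A}_{u,v}$ has smallest eigenvalue at most $10/n^2 + 2\xi$. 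The parameter choices $\Lambda = 10^{-4}/n^{1+128c_{3.3}c_{3.10}\mu}$, $\xi = 10^{-5}/n^{1+128c_{3.3}c_{3.10}\mu}$, together with $\mu < C = 1/(128c_{3.3}c_{3.10})$ (so that the exponent is strictly less than $2$), make this quantity strictly less than $\Lambda$ for all sufficiently large $n$, so step 6 does not reject.

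Adding the two contributions, the per-round failure probability is at most $4\eta + 3\eta = 7\eta$, and the union bound over $R$ rounds finishes. The main obstacle is not conceptual but rather careful parameter bookkeeping: one must check that the numbers $R,\eta,\sigma,\xi,r,\Lambda,t,N$ prescribed in Theorem \ref{thm:main} simultaneously satisfy the hypotheses of Lemmas \ref{lem:small_norm}, \ref{lem:eigsofA}, \ref{lem:norm_difference} and Theorems \ref{thm:dotproduct}, \ref{thm:l2norm}, and in particular that the two inequalities $10/n^2 + 2\xi < \Lambda$ and $\sigma/4 \ge \norm{\p_w^t}^2$ for $w \in V'$ hold; the constraint $\mu < C$ is exactly what is needed for the first of these.
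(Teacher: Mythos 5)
Your proposal is correct and follows essentially the same decomposition as the paper's proof: bound the per-round rejection probability by a small multiple of $\eta$ by separately handling (i) the event $u$ or $v$ lies outside $V'$ from Lemma~\ref{lem:small_norm}, (ii) the norm-tester rejecting, and (iii) the eigenvalue test rejecting, where (iii) is handled via Lemma~\ref{lem:norm_difference} $\to$ Lemma~\ref{lem:eigsofA} $\to$ Weyl's inequality and the parameter inequality $10/n^2 + 2\xi < \Lambda$. The only cosmetic differences are that you union-bound over three distinct matrix entries ($3\eta$) where the paper uses all four ($4\eta$), giving $7\eta$ per round instead of $8\eta$; both are $\le 1/(3R)$, so the conclusion is unaffected.
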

\begin{proof}
Let $G$ be a $(2, \phi)$-clusterable graph. We analyze one round of \textbf{Cluster-Test} and calculate the rejection probability of one round. Note that \textbf{Cluster-Test} samples a pair of vertices $u$ and $v$ uniformly at random from $G$ at each round. There are three ways one round can reject $G$:
\begin{enumerate}
    \item One of the vertices $u$ or $v$ in the complement of $V'$ in Lemma \ref{lem:small_norm}.
    \item $l_2\textbf{-Norm-Tester}$ rejects $\p_u^t$ or $\p_v^t$ in step $4$ of \textbf{Cluster-Test}.
    \item Both of the eigenvalues of $\tilde{\A}_{u,v}$ are larger than $\Lambda$.
\end{enumerate}
Setting $\gamma = \eta = \frac{1}{24R}$ in Lemma \ref{lem:small_norm}, we see that both $u$ and $v$ lie inside $V'$ in Lemma \ref{lem:small_norm} with probability at least $(1-\eta)^2$. Therefore, the rejection probability of case $1$ is at most $1-(1-\eta)^2\leq 2\eta$.

If $u,v \in V'$ as defined in Lemma \ref{lem:small_norm}, then $\norm{\p_u^t}^2, \norm{\p_v^t}^2 \le \frac{4}{\eta n}$. Given this along with the fact that $\sigma = \frac{16}{\eta n}$, we have that $l_2\textbf{-Norm-Tester}$ accepts $G$ with probability at least $\left(1-\frac{16 \sqrt{n}}r \right)^2 = (1-\eta)^2. $ Therefore, the rejection probability of case $2$ is also at most $2\eta$.

By Lemma \ref{lem:norm_difference}, $\q_u^t$ and $\q_v^t$ are $\left( 1 - \frac{\phi^2}{32c_{3.3}} \right)^{t}$-close to collinear. Recall that $t \ge \frac{64c_{3.3}\log n}{\phi^2}$ in Theorem \ref{thm:main}. Therefore by Lemma \ref{lem:eigsofA}, $\A_{u,v}$ has at least one eigenvalue smaller than
$$ 10\left (1-  \frac{\phi^2 }{32c_{3.3}}\right )^{t} \le 10\exp\left(- \frac{\phi^2t}{32c_{3.3}}\right) \le  \frac{10}{n^2}. $$
The matrix $\tilde{\A}_{u,v}$ that $\textbf{Cluster-Test}$ computes can be written as $\tilde{\A}_{u,v} = \A_{u,v} + \mathbf{E}$ where each entry of the $2$ by $2$ matrix $\mathbf{E}$ is at most $\xi$ with probability $1-\eta$ due to $l_2 \textbf{-Inner-Product-Estimator}$. Therefore, $\norm{\mathbf{E}}_F \le 2 \xi$ with probability $(1-\eta)^4$. If this holds, then by Weyl's inequality, $\tilde{\A}_{u,v}$ has an eigenvalue at most 
$$\frac{10}{n^2} + 2\xi < \Lambda.$$ Therefore, the rejection probability of case $3$ is at most $4\eta$.

Adding up the rejection probabilities of each of the three cases tells us that one round rejects $G$ with probability at most $8 \eta$. Thus the total probability that we reject $G$ in one of the $R$ rounds is at most $8 \eta R \le \frac{1}3$, as desired. The query complexity is $O(tNR) = O(n^{1/2+O(1)\mu} \cdot \text{poly}(1/\epsilon, 1/\phi,\log n))$.
\end{proof}

\subsection{Soundness: rejecting graphs \texorpdfstring{$\epsilon$}{Lg}-far from \texorpdfstring{$(2, \phi^*)$}{Lg}-clusterable} \label{sec:soundness}
In this section we show that \textbf{Cluster-Test} rejects $G$ with probability greater than $\frac{2}3$ if $G$ is $\epsilon$-far from $(2, \phi^*)$-clusterable for $\phi^* \le \mu \phi^2 \epsilon^2$. We introduce two properties that expand on the property of $\epsilon$-close to collinear.

\begin{definition} \label{defn:antipodal} \normalfont
Vectors \textbf{a} and \textbf{b} are \emph{$\epsilon$-close to antipodal} if they can be moved distance at most $\epsilon$ to lie on a line through the origin where the origin lies between the two moved points. Vectors $\textbf{a}$ and $\textbf{b}$ are \emph{$\epsilon$-far from antipodal} if they are not $\epsilon$-close to antipodal.
\end{definition}
\begin{definition} \label{defn:podal} \normalfont
Vectors \textbf{a} and \textbf{b} are \emph{$\epsilon$-close to podal} if they can be moved distance at most $\epsilon$ to lie on a line through the origin where the origin does not lie between the two moved points. Vectors $\textbf{a}$ and $\textbf{b}$ are \emph{$\epsilon$-far from podal} if they are not $\epsilon$-close to podal.
\end{definition}
See Figure \ref{fig:antipodal} for reference. Note that vectors \textbf{a} and \textbf{b} are $\epsilon$-far from collinear if and only if they are $\epsilon$-far from both antipodal and podal. 

\begin{figure}[!htbp]
    \centering
    \subfloat[]{{\includegraphics[scale=0.49]{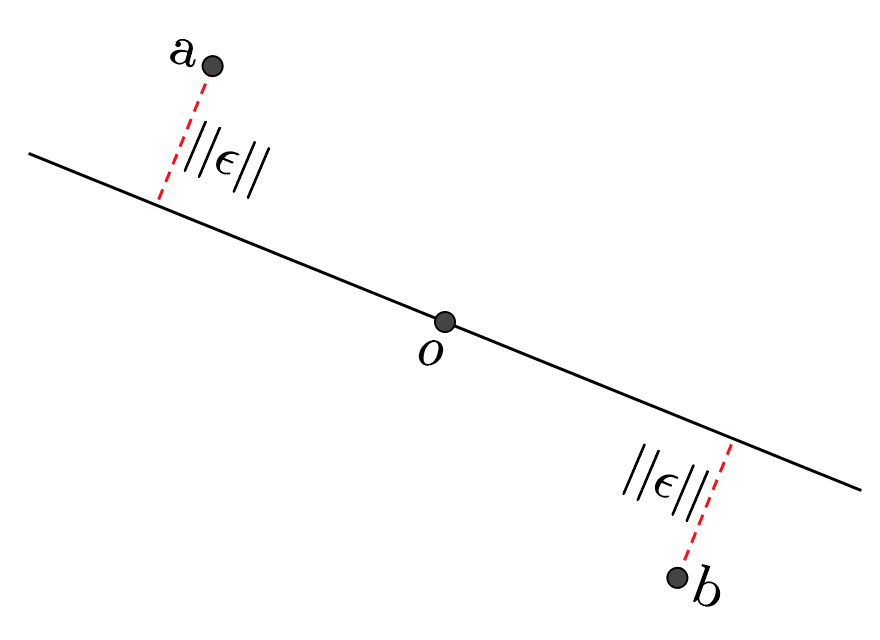} }}%
    \qquad
    \subfloat[]{{\includegraphics[scale=0.49]{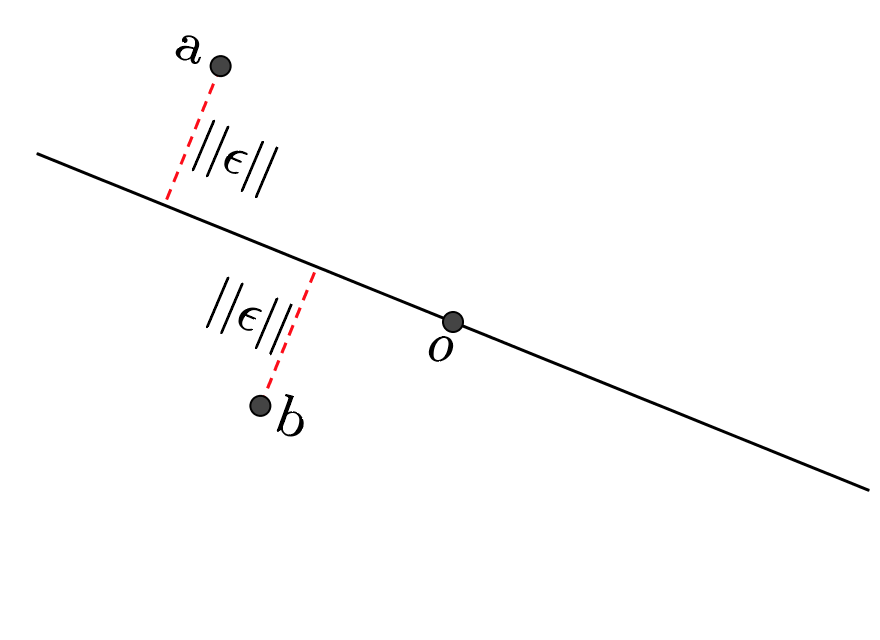} }}%
    \caption{Origin is denoted as $o$.  Vectors \textbf{a} and \textbf{b} are $\epsilon$-close to collinear in both cases. \\  (a): Vectors \textbf{a} and \textbf{b} are $\epsilon$-close to antipodal. (b): Vectors \textbf{a} and \textbf{b} are $\epsilon$-close to podal.}%
    \label{fig:antipodal}%
\end{figure}

We now outline our argument which shows that \textbf{Cluster-Test} rejects graph $G$ if $G$ is $\epsilon$-far from $(2, \phi^*)$-clusterable. We do this by showing that there are many pair of vertices $(u,v)$ where $\q_u^t$ and $\q_v^t$ are \emph{far} from collinear which allows us to say that the eigenvalues of $\A_{u,v}$ are large due to Lemma \ref{lem:eigsofA}. This is a relatively harder task than showing that $\q_u^t$ and $\q_v^t$ are close to collinear in the completeness case so we need a more complicated argument which is detailed below. For $S \subseteq V$,  we define $\q_{S}^t = \frac{1}{|S|}\sum_{u \in S} \q_u^t$.
\begin{itemize}
    \item We first present a result from \cite{k_clusterability} in Lemma \ref{lem:sparse_cuts} which says that $G$ has two large subsets of vertices $S_1$ and $S_2$ that are each separated from the rest of the vertices by sparse cuts.
    \item We let $\Pi$ be the projection onto the span of the eigenvectors of $\M$ with ``large'' eigenvalues. We use the above result to show that the aggregate vectors $\Pi \q_{S_1}^0$ and $\Pi \q_{S_2}^0$ are far from collinear in Lemma \ref{lem:large_norm}. This projection trick is necessary to relate $\q_u^t$ to $\Pi \q_u^0$ later on.
\end{itemize}
We now want to use the fact that the aggregate vectors $\Pi \q_{S_1}^0$ and $\Pi \q_{S_2}^0$  are far from collinear to find many \emph{pairs} of vectors that are far from collinear.
    
\begin{itemize}
    \item We use the pigeonhole principle to deduce that there are $\Theta(n^2)$ pairs of vertices $(u,v)$ such that $\Pi \q_u^0$ and $\Pi \q_v^0$ are far from antipodal. Similarly, we show that that there are $\Theta(n^2)$ pairs of vertices $(u,v)$ such that $\Pi \q_u^0$ and $\Pi \q_v^0$ are far from podal. This is shown in Lemma \ref{lem:many_pairs1}.
\end{itemize}
Note that the above point does \emph{not} immediately imply that there are $\Theta(n^2)$ pairs of vertices $(u,v)$ such that $\Pi \q_u^0$ and $\Pi \q_v^0$ are far from both podal and antipodal.

\begin{itemize}
    \item We use results from the previous step along with geometric properties of the vectors $\Pi \q_u^0$ to show that there are  $\Theta(n^2)$ pairs of vertices $(u,v)$ such that $\Pi \q_u^0$ and $\Pi \q_v^0$ are far from collinear in Lemmas \ref{lem:supplementary} and \ref{lem:many_pairs2}.
    
    \item Using properties of $\Pi,$ we transfer this result on the $\Pi \q_u^0$ vectors to the $\q_u^t$ vectors.
    
    \item Finally we refer back to Lemma \ref{lem:eigsofA} to argue that there are many pairs $(u,v)$ such that both the eigenvalues of $\A_{u,v}$ are sufficiently large which means that \textbf{Cluster-Test} rejects $G$ with probability at least $\frac{2}3$. This is shown in Lemmas \ref{lem:many_repr_pairs} and \ref{lem:algproof_soundness}. 
\end{itemize}
We now give quantitative versions of the definitions of antipodal and podal which is useful later on in our argument.
\begin{lemma} \label{lem:line_proj2}
If vectors $\textbf{a}$ and $\textbf{b}$ are $\epsilon$-close to antipodal then
\begin{equation} \label{eq:antipodal}
   \min_{0 \le \alpha \le 1} \norm{\alpha \textbf{a} + (1-\alpha) \textbf{b}} \le \epsilon. 
\end{equation}
Similarly, if $\textbf{a}$ and $\textbf{b}$ are $\epsilon$-close to podal then
\begin{equation} \label{eq:podal}
    \min_{0 \le \alpha \le 1} \norm{\alpha \textbf{a} + (1-\alpha) (-\textbf{b})} \le \epsilon.
\end{equation}
\end{lemma}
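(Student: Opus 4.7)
The plan is to unpack the definitions directly and produce the desired convex combination by cancelling the ``line part'' of $\textbf{a}$ and $\textbf{b}$, leaving only the small perturbations. For the antipodal case, if $\textbf{a}$ and $\textbf{b}$ are $\epsilon$-close to antipodal there exist error vectors $\e_a,\e_b$ with $\norm{\e_a},\norm{\e_b}\le\epsilon$ and a unit vector $\textbf{v}$ together with scalars $s,t\ge 0$ such that $\textbf{a}+\e_a=s\textbf{v}$ and $\textbf{b}+\e_b=-t\textbf{v}$ (the sign flip encodes that the origin lies between the two moved points).

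First I would substitute this parametrization into the expression $\alpha\textbf{a}+(1-\alpha)\textbf{b}$ to obtain
\[
\alpha\textbf{a}+(1-\alpha)\textbf{b} = \bigl(\alpha s - (1-\alpha)t\bigr)\textbf{v} - \alpha\e_a - (1-\alpha)\e_b.
\]
Then I would choose $\alpha = t/(s+t)\in[0,1]$ (handling the trivial case $s=t=0$ by taking $\alpha=1/2$, in which case $\norm{\textbf{a}},\norm{\textbf{b}}\le\epsilon$ already) so that the coefficient of $\textbf{v}$ vanishes. A triangle-inequality bound then gives $\norm{\alpha\textbf{a}+(1-\alpha)\textbf{b}}\le \alpha\norm{\e_a}+(1-\alpha)\norm{\e_b}\le\epsilon$, proving \eqref{eq:antipodal}.

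The podal case proceeds identically after one sign change. If $\textbf{a}$ and $\textbf{b}$ are $\epsilon$-close to podal then I may write $\textbf{a}+\e_a=s\textbf{v}$ and $\textbf{b}+\e_b=t\textbf{v}$ with $s,t\ge 0$ and $\norm{\e_a},\norm{\e_b}\le\epsilon$ (both points on the same side of the origin). Substituting into $\alpha\textbf{a}+(1-\alpha)(-\textbf{b})$ yields
\[
\alpha\textbf{a}-(1-\alpha)\textbf{b} = \bigl(\alpha s - (1-\alpha)t\bigr)\textbf{v} - \alpha\e_a + (1-\alpha)\e_b,
\]
and the same choice $\alpha=t/(s+t)$ kills the $\textbf{v}$ term, leaving a combination of the $\e$'s with norm at most $\epsilon$.

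There is essentially no technical obstacle: the only minor nuisance is verifying that the chosen $\alpha$ lies in $[0,1]$ and covering the degenerate case $s=t=0$. Both points are handled by noting that $s,t\ge 0$ in both regimes (by the definitions of antipodal and podal) and that when both scalars vanish $\textbf{a}$ and $\textbf{b}$ themselves have norm at most $\epsilon$, so any $\alpha\in[0,1]$ works.
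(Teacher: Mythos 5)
Your proof is correct and follows essentially the same route as the paper: both pick the convex weight $\beta$ (your $\alpha = t/(s+t)$) that makes $\beta(\textbf{a}+\e_a)+(1-\beta)(\textbf{b}+\e_b)=0$ and then apply the triangle inequality to leave only $\norm{\beta\e_a+(1-\beta)\e_b}\le\epsilon$. Your write-up is slightly more explicit in constructing that weight from the parametrization $s\textbf{v}$, $-t\textbf{v}$, but the underlying argument is identical.
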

\begin{proof}
 If $\textbf{a}$ and $\textbf{b}$ are $\epsilon$-close to collinear then there exists $\mathbf{e}_a$ and $\mathbf{e}_b$ such that $\textbf{a} + \mathbf{e}_a$ and $\textbf{b} +  \mathbf{e}_b$ lie on the same line through the origin and $\norm{\mathbf{e}_a}, \norm{\mathbf{e}_b} < \epsilon$. If $\textbf{a}$ and $\textbf{b}$ are $\epsilon$-close to antipodal then we can find $0 \le \beta \le 1$ such that $\beta(\textbf{a} + \mathbf{e}_a) + (1-\beta)(\textbf{b} + \mathbf{e}_b) = 0$. We have
\begin{align*}
    \norm{\beta \textbf{a} + (1-\beta)\textbf{b}} &= \norm{\beta(\textbf{a} + \mathbf{e}_a) + (1-\beta)(\textbf{b} + \mathbf{e}_b) - (\beta \mathbf{e}_a + (1-\beta)\mathbf{e}_b)} \\
    &\le \norm{\beta(\textbf{a} + \mathbf{e}_a) + (1-\beta)(\textbf{b} + \mathbf{e}_b)} + \norm{\beta \mathbf{e}_a + (1-\beta)\mathbf{e}_b}  \\
    & \le \epsilon.
\end{align*}
 Therefore,
$$ \min_{0 \le \alpha \le 1}\norm{\alpha \textbf{a} + (1-\alpha)\textbf{b}} \le \epsilon$$
which proves Eq.\@ \eqref{eq:antipodal}. A similar calculation for the podal case proves Eq.\@ \eqref{eq:podal}.
\end{proof}

We restate a lemma from \cite{k_clusterability} which says that we can partition a graph that is from $(2,\phi')$-clusterable into three subsets of vertices that are separated by sparse cuts.

\begin{lemma}[Lemma $4.5$ in \cite{k_clusterability}] \label{lem:sparse_cuts}
 Let $G = (V,E)$ be a graph with maximum degree at most $d$. There are constants $\alpha$ and $c_{3.10}$, that depend on $d$, such that if $G$ is $\epsilon$-far from $(2, \phi')$-clusterable with $\phi' \le \alpha \epsilon$,  then there exists a partition of $V$ into three subsets $S_1, S_2, S_{3}$ such that for each $i \in \{1,2,3\}$,  we have $|S_i| \ge \frac{\epsilon^2|V|}{2 \cdot 10^4}$ and $\phi_G(S_i) \le c_{3.10}\phi'\epsilon^{-2}$.
\end{lemma}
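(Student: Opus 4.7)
The plan is to build $(S_1,S_2,S_3)$ by first extracting a sparse cut $(A,V\setminus A)$ of $G$ and then a further sparse inner cut in whichever of $G[A]$, $G[V\setminus A]$ has low inner conductance, using the $\epsilon$-far hypothesis at each step to control the sizes.

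For the initial cut, I would observe that Definition 2.1 renders any graph with $\phi(G)\ge \phi'$ automatically $(2,\phi')$-clusterable via the one-part clause, so $G$ not being in the property already forces $\phi(G)<\phi'$. The far stronger $\epsilon$-far hypothesis should upgrade this to a cut $(A,V\setminus A)$ with $\min(|A|,|V\setminus A|)\ge \epsilon^{2}|V|/(2\cdot 10^{4})$ and $\phi_G(A)\le O(\phi'/\epsilon)$. The reason is that any sparse cut whose smaller side has fewer vertices than $\epsilon^{2}|V|/(2\cdot 10^{4})$ can be eliminated by deleting its few boundary edges and, if necessary, inserting a few edges inside it to raise its inner conductance above $\phi'$; for $\phi'\le \alpha\epsilon$ with a sufficiently small $\alpha=\alpha(d)$ the total number of edits is well under the budget $\epsilon d|V|$, contradicting $\epsilon$-farness.

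For the second cut, if both $G[A]$ and $G[V\setminus A]$ had inner conductance $\ge \phi'$ then the 2-partition $(A,V\setminus A)$ would itself witness $(2,\phi')$-clusterability, contradicting the hypothesis; so one of them, call it $G[B]$, admits a sparse inner cut $(B_1,B_2)$. Repeating inside $B$ the same size-correcting argument, with $|B|\ge \epsilon^{2}|V|/(2\cdot 10^{4})$ playing the role of $|V|$, produces $(B_1,B_2)$ with $|B_1|,|B_2|\ge \epsilon^{2}|V|/(2\cdot 10^{4})$ and inner conductance in $G[B]$ at most $O(\phi'/\epsilon)$. Taking $\{S_1,S_2,S_3\}=\{B_1,B_2,V\setminus B\}$, the $G$-boundary of each $S_i$ is bounded by the sum of the first-cut boundary and the second-cut boundary; dividing by $d|S_i|\ge d\cdot\epsilon^{2}|V|/(2\cdot 10^{4})$ and pushing the inner-conductance bound from $G[B]$ back into $G$ composes the two $\epsilon^{-1}$ Cheeger-type losses into a single $\epsilon^{-2}$ factor, yielding the advertised bound $\phi_G(S_i)\le c_{3.10}\phi'\epsilon^{-2}$ with $c_{3.10}$ depending only on $d$.

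The main obstacle I expect is making the two appeals to ``$\epsilon$-far forces a sparse cut with both sides large'' fully rigorous, because one must rule out every cheap modification that achieves $(2,\phi')$-clusterability---not only deletions of boundary edges but also edge insertions inside a piece to boost its inner conductance. The cleanest realization, which I believe is the approach in \cite{k_clusterability}, is to fix a nearest $(2,\phi')$-clusterable graph $G^{\ast}$ to $G$, read off $(S_1,S_2,S_3)$ from a refinement of the witnessing 2-partition of $G^{\ast}$ by a sparse inner cut of its weaker side, and then argue that the more than $\epsilon d|V|$ edits separating $G$ from $G^{\ast}$ must concentrate on a bounded number of $G$-boundaries. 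What remains is bookkeeping of the constants $\alpha$ and $c_{3.10}$ rather than any new idea.
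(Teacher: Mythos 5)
The paper does not prove this lemma; it cites it verbatim as Lemma~4.5 of \cite{k_clusterability} and uses it as a black box, so there is no in-paper argument to match your proposal against. Your high-level skeleton---peel a sparse cut from $G$, find a second sparse cut inside whichever side has low inner conductance, and read off $\{S_1,S_2,S_3\}$ from the two cuts---is a reasonable shape for a statement of this type, and you correctly identify that the size lower bounds must come from the $\epsilon$-farness budget.

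The concrete step that does not go through is the claim that $\epsilon$-farness forbids sparse cuts with a small side. You argue that if $\min(|A|,|V\setminus A|)<\epsilon^2|V|/(2\cdot 10^4)$ then deleting the $O(d\epsilon^2|V|)$ boundary edges and rewiring the small side makes $G$ $(2,\phi')$-clusterable within budget, giving a contradiction. But one such repair does not make $G$ clusterable: nothing forces $G[V\setminus A]$ to have inner conductance at least $\phi'$, and there may be many more sparse cuts, each with a small side, whose repair costs must be \emph{aggregated} before any contradiction with $\epsilon$-farness appears. Closing this requires either an iterated peel-and-repair bookkeeping argument (showing that if every sparse cut encountered is unbalanced, the accumulated edits produce a $(2,\phi')$-clusterable graph under the $\epsilon d|V|$ budget) or a global counting argument, not the one-shot contradiction you invoke. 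Your fallback via a nearest clusterable graph $G^{\ast}$ is also not coherent as written, since a witnessing partition of $G^{\ast}$ has \emph{no} sparse inner cuts by definition of clusterability, so there is nothing to ``read off'' a refinement from. Finally, the arithmetic of the conductance bound does not compose as you claim: when you pass from the inner conductance of the second cut in $G[B]$ back to $\phi_G(S_i)$, the entire $G$-boundary of $B$ is charged against $|S_i|\ge\epsilon^2|V|/(2\cdot 10^4)$, and the ratio $\min(|B|,|V\setminus B|)/|S_i|$ alone can be $\Theta(\epsilon^{-2})$. That already exhausts the allowed $\epsilon^{-2}$; an additional $\epsilon^{-1}$ loss in the first cut, as you posit, would give $\epsilon^{-3}$. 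So the first cut must be found with conductance $O(\phi')$ rather than $O(\phi'/\epsilon)$, which is not established by the sketch.
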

From now on we assume that $S_1$ and $S_2$ are the smallest of the two parts so $$ \frac{\epsilon^2 n}{10^4} \le |S_1| + |S_2| \le \frac{2n}3$$ always holds. 

We begin by showing that a projection of the aggregate vectors $\q_{S_1}^0, \q_{S_2}^0$ are $O\left(\frac{1}{\sqrt{|S_1| + |S_2|}}\right)$-far from collinear by using tools from \cite{kale_seshadhri}. 
\begin{lemma} \label{lem:large_norm}
Let $S_1$ and $S_2$ be two disjoint subsets of vertices such that the cut $(S_i, V \setminus S_i)$ has conductance less than $\delta$ for $i \in \{1,2\}$. Suppose that $|S_1| + |S_2| \le \frac{2n}3$ and let $\Pi$ denote the projection onto the span of the eigenvectors of $\M$ with eigenvalue greater than $1-4\delta$. Then
$\Pi \q_{S_1}^0$ and $\Pi \q_{S_2}^0$ are $\frac{1}{\sqrt{|S_1|+|S_2|}}$-far from collinear.
\end{lemma}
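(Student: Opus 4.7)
The plan is to argue by contradiction. Suppose $\Pi \q_{S_1}^0$ and $\Pi \q_{S_2}^0$ are $\epsilon$-close to collinear with $\epsilon = 1/\sqrt{|S_1|+|S_2|}$. By Definition \ref{defn:epsilon_close} there exist perturbations $\mathbf{e}_1, \mathbf{e}_2$ with $\norm{\mathbf{e}_i} \le \epsilon$ moving the two vectors onto a common line through the origin; extracting a unit vector of coefficients that annihilates this rank-one combination yields $(\alpha_1, \alpha_2)$ with $\alpha_1^2+\alpha_2^2=1$ and $\norm{\alpha_1 \Pi\q_{S_1}^0 + \alpha_2 \Pi\q_{S_2}^0} \le \sqrt{2}\epsilon$. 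Setting $\mathbf{g} = \alpha_1\q_{S_1}^0 + \alpha_2\q_{S_2}^0$ and using the linearity of $\Pi$, the strategy is to lower bound $\norm{\Pi\mathbf{g}}^2$ past $2\epsilon^2 = 2/(|S_1|+|S_2|)$ for every such unit $(\alpha_1,\alpha_2)$, producing the contradiction.

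The first ingredient is a Rayleigh-quotient bound on the $\Pi^\perp$ component. A direct expansion of $\One_S^T \Lp \One_S$ using $\Lp = 2\mathbf{I} - 2\M$ and the column-stochastic structure of $\M$ gives $\One_S^T \Lp \One_S = e(S, V\setminus S)/d$, so that $(\q_S^0)^T \Lp \q_S^0 = \phi_G(S)/|S|$. Because $\Pi^\perp$ projects onto the eigenvectors of $\Lp$ with eigenvalue at least $8\delta$, the hypothesis $\phi_G(S_i) < \delta$ gives $\norm{\Pi^\perp \q_{S_i}^0}^2 \le 1/(8|S_i|)$. An analogous expansion yields $\mathbf{g}^T\Lp\mathbf{g} = \alpha_1^2\phi_G(S_1)/|S_1| + \alpha_2^2\phi_G(S_2)/|S_2| - 2\alpha_1\alpha_2 e(S_1,S_2)/(d|S_1||S_2|)$, where the cross term comes from $\One_{S_1}^T\Lp\One_{S_2} = -e(S_1,S_2)/d$; the bounds $\phi_G(S_i)<\delta$ and $e(S_1,S_2) \le d\delta\min(|S_1|,|S_2|)$ then control $\norm{\Pi^\perp\mathbf{g}}^2$ on the order of $1/|S_1|+1/|S_2|$.

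The second ingredient is a direct lower bound on $\norm{\mathbf{g}}^2$. Using $\norm{\q_{S_i}^0}^2 = 1/|S_i| - 1/n$ and the disjointness identity $\langle\q_{S_1}^0, \q_{S_2}^0\rangle = -1/n$, one writes $\norm{\mathbf{g}}^2 = \alpha_1^2/|S_1| + \alpha_2^2/|S_2| - (\alpha_1+\alpha_2)^2/n = \alpha^T Q\alpha$ for the $2\times 2$ matrix $Q$ with diagonal entries $1/|S_i| - 1/n$ and off-diagonal $-1/n$. A short computation gives $\det(Q) = (n-|S_1|-|S_2|)/(n|S_1||S_2|)$, and the inequality $\lambda_{\min}(Q) \ge \det(Q)/\mathrm{tr}(Q)$ combined with the hypothesis $|S_1|+|S_2| \le 2n/3$ (which forces $n-|S_1|-|S_2| \ge n/3$) yields $\norm{\mathbf{g}}^2 \ge 1/(3(|S_1|+|S_2|))$. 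Combining via $\norm{\Pi\mathbf{g}}^2 = \norm{\mathbf{g}}^2 - \norm{\Pi^\perp\mathbf{g}}^2$ then delivers the contradiction.

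The main obstacle is controlling the sign-ambiguous cross term $-2\alpha_1\alpha_2 e(S_1, S_2)/(d|S_1||S_2|)$ in $\mathbf{g}^T\Lp\mathbf{g}$: a symmetric bound gives $\norm{\Pi^\perp\mathbf{g}}^2 \le (1/|S_1|+1/|S_2|)/8$, which can rival or exceed $1/(3(|S_1|+|S_2|))$ when the two parts are very unbalanced. Closing the gap requires treating the Rayleigh form $\alpha^T R\alpha$ for $\mathbf{g}^T\Lp\mathbf{g}$ and the Gram form $\alpha^T Q\alpha$ for $\norm{\mathbf{g}}^2$ jointly, so that the $(\alpha_1, \alpha_2)$ minimizing $\norm{\mathbf{g}}$ is not simultaneously the direction maximizing $\mathbf{g}^T\Lp\mathbf{g}/\norm{\mathbf{g}}^2$.
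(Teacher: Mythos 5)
Your overall blueprint---argue by contradiction, bound the $\Pi^\perp$ component via the Rayleigh quotient of $\Lp$, compute $\norm{\mathbf{g}}^2$ directly, and subtract---is the same one the paper uses, but the normalization you chose is what makes the argument unclosable, not the cross term. You extract a coefficient vector with $\alpha_1^2+\alpha_2^2=1$ and then need to show $\lambda_{\min}$ of the $2\times2$ Gram matrix of $\Pi\q_{S_1}^0,\Pi\q_{S_2}^0$ exceeds $2\epsilon^2 = 2/(|S_1|+|S_2|)$. That is strictly stronger than the lemma's conclusion, and your own estimates cap it well short: combining $\norm{\mathbf{g}}^2 \ge \alpha_1^2/|S_1|+\alpha_2^2/|S_2| - (\alpha_1+\alpha_2)^2/n$ with $\norm{\Pi^\perp\mathbf{g}}^2 \le (\alpha_1^2/|S_1|+\alpha_2^2/|S_2|)/4$ and minimizing over the unit $\ell_2$-sphere yields at best on the order of $1/(12(|S_1|+|S_2|))$, an order of magnitude below $2\epsilon^2$; this is genuinely stuck, not just lossy. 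The paper instead works with coefficients of the form $(\alpha, 1-\alpha)$ and $(\alpha, -(1-\alpha))$ for $\alpha\in[0,1]$---i.e.\ $|\alpha_1|+|\alpha_2|=1$---which is precisely what Definition \ref{defn:epsilon_close} together with Lemma \ref{lem:line_proj2} tie to closeness-to-collinear, and which also removes the $\sqrt{2}$ loss in your reduction. Under that $\ell_1$-normalization, Cauchy--Schwarz gives $\alpha_1^2/|S_1|+\alpha_2^2/|S_2|\ge(|\alpha_1|+|\alpha_2|)^2/(|S_1|+|S_2|)=1/(|S_1|+|S_2|)$, which is the key lower bound the proof needs and which has no analogue with $\ell_2$-normalized coefficients when $|S_1|,|S_2|$ are unbalanced.

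The obstacle you flag---the sign-ambiguous cross term $-2\alpha_1\alpha_2\,e(S_1,S_2)/(d|S_1||S_2|)$---is handled uniformly without case analysis: write $\tfrac12\mathbf{g}^T\Lp\mathbf{g} = \sum_{i<j}\M_{ij}(u_i-u_j)^2$ as the paper does, bound the $S_1$--$S_2$ contribution via $(\alpha_1/|S_1|-\alpha_2/|S_2|)^2 \le 2\alpha_1^2/|S_1|^2 + 2\alpha_2^2/|S_2|^2$ (valid for all signs), and observe that each $S_1$--$S_2$ edge is already counted in both $e(S_1,V\setminus S_1)$ and $e(S_2,V\setminus S_2)$, so this contribution absorbs into the two conductance bounds to give $\mathbf{g}^T\Lp\mathbf{g}\le 2\delta(\alpha_1^2/|S_1|+\alpha_2^2/|S_2|)$ directly. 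That is exactly the ``joint treatment'' you wanted, and once you have it the only remaining obstruction is the choice of norm above. A final caveat: even with the $\ell_1$ normalization the paper's arithmetic bottoms out at $\norm{\alpha_1\Pi\q_{S_1}^0+\alpha_2\Pi\q_{S_2}^0}^2\ge 1/(12(|S_1|+|S_2|))$, so the constant in the lemma's statement has an implicit $1/\sqrt{12}$ that is then carried explicitly into Lemma \ref{lem:large_norm2}; do not expect $1/\sqrt{|S_1|+|S_2|}$ to appear on the nose.
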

\begin{proof}
Recall that $\Lp$ is the Laplacian and $\M$ is the lazy random walk matrix related by the equation $\Lp = 2 \mathbf{I} - 2\M$. Also recall that the eigenvalues of $\M$ are $1 = \nu_1 \ge \nu_2 \ge \cdots \ge \nu_n \ge 0$ with corresponding eigenvectors $\V_1, \ldots, \V_n$. Let $s_1 = |S_1|, s_2 = |S_2|$ and define the vector $\mathbf{f}$ as 
$$ \textbf{f}(v) = 
\begin{cases*}
      \frac{\alpha}{s_1} & if $v \in S_1$, \\
     \frac{\beta}{s_2}  & if $v \in S_2$, \\
      0 & otherwise,
\end{cases*}
$$
where $\alpha$ is any constant in $[0,1]$ and $\beta \in \{\alpha, 1-\alpha\}$. Let $\U = \textbf{f} - \frac{\alpha + \beta}{n}\One = \alpha \q_{S_1}^0 + \beta \q_{S_2}^0$. Write $\U$ in the eigenbasis of $\M$ as $\U = \sum_i c_i \V_i$. We have $\norm{\U}^2 = \sum_{i}c_i^2$ and one can compute that $\norm{\U}^2 = \frac{\alpha^2}{s_1} + \frac{\beta^2}{s_2} - \frac{(\alpha + \beta)^2}n$. Equating these two gives 
\begin{equation} \label{eq:lem311}
    \sum_{i}c_i^2 = \frac{\alpha^2}{s_1} + \frac{\beta^2}{s_2} - \frac{(\alpha + \beta)^2}n \ge \frac{\alpha^2}{s_1} + \frac{\beta^2}{s_2} - \frac{1}n.
\end{equation}
 We now also compute $\U^T \Lp \U$ in two different ways. We have $\frac{\U^T \Lp \U}2 = \norm{\U}^2 - \sum_i c_i^2 \nu_i$. On the other hand, using the quadratic form of $\Lp$ gives us $\frac{\U^T\Lp\U}{2} = \sum_{i < j} \M_{ij}(u_i - u_j)^2$.
Now note that there are three cases where the term $(u_i - u_j)^2$ is nonzero:
\begin{enumerate}
    \item One of vertex $i$ and vertex $j$ lies in $S_1$ and the other lies in $V\setminus (S_1 \cup S_2)$,
    \item One of vertex $i$ and vertex $j$ lies in $S_2$ and the other lies in $V\setminus (S_1 \cup S_2)$,
    \item One of vertex $i$ and vertex $j$ lies in $S_1$ and the other lies in $S_2$.
\end{enumerate}
In these three cases, $(u_i - u_j)^2$ evaluates to $\frac{\alpha^2}{s_1^2}$,  $\frac{\beta^2}{s_2^2}$, and $\left( \frac{\alpha}{s_1} - \frac{\beta}{s_2} \right)^2$ respectively. We bound these expressions from above by $\frac{2\alpha^2}{s_1^2}$, $\frac{2\beta^2}{s_2^2}$, and $\frac{2\alpha^2}{s_1^2} + \frac{2\beta^2}{s_2^2}$ respectively to extract the bound

$$
\sum_{i < j} \M_{ij}(u_i - u_j)^2 \le \frac{1}{2d}\left(\frac{2\alpha^2}{s_1^2}e(S_1,V\setminus S_1)+ \frac{2\beta^2}{s_2^2}e(S_2,V\setminus S_2)\right)
$$

Now using the fact that the $(S_i, V \setminus S_i)$ has conductance less than $\delta$ for each $i  \in \{1, 2\}$,  we have
\begin{align*}
    \norm{\U}^2 - \sum_i c_i^2 \nu_i &= \sum_{i < j} \M_{ij}(u_i - u_j)^2 \\
    & \le \frac{1}{2d} \frac{2\alpha^2}{s_1^2} \, \delta d s_1 +  \frac{1}{2d} \frac{2\beta^2}{s_2^2} \, \delta d s_2  \\
    &= \delta \left( \frac{\alpha^2}{s_1} +  \frac{\beta^2}{s_2}\right).
\end{align*}
It follows that 
$$ \sum_i c_i^2 \nu_i > \frac{\alpha^2}{s_1} + \frac{\beta^2}{s_2} - \frac{1}n - \delta \left( \frac{\alpha^2}{s_1} + \frac{\beta^2}{s_2} \right). $$
Call $\nu_i > 1 - 4 \delta$ ``heavy" and let $H$ be the set of indices of the heavy eigenvalues. Letting $x = \sum_{i \in H} c_i^2$,  we have 
$$x + \left( \sum_i c_i^2 -x \right)\left(1- 4 \delta \right) >  \frac{\alpha^2}{s_1} + \frac{\beta^2}{s_2} - \frac{1}n - \delta \left( \frac{\alpha^2}{s_1} + \frac{\beta^2}{s_2} \right).$$
Then using Eq.\@ \eqref{eq:lem311} implies that 
$$ x > \frac{3}4 \left( \frac{\alpha^2}{s_1} + \frac{\beta^2}{s_2} \right) - \frac{1}n.$$
By Cauchy-Schwartz,
$$ \frac{\alpha^2}{s_1} + \frac{\beta^2}{s_1} \ge \frac{|\alpha| + |\beta|}{s_1 + s_2} = \frac{1}{s_1+s_2} .$$
By assumption $\frac{2}{3(s_1+s_2)} \ge \frac{1}{n}$ so
$x \ge \frac{1}{12(s_1+s_2)}$ and hence,
\begin{equation*}
    \norm{\alpha \Pi \q_{S_1}^0  + \beta \Pi \q_{S_2}^0 }^2 \ge  \frac{1}{12(|S_1| + |S_2|)}. \qedhere
\end{equation*}
\end{proof}
We now present the following lemma which shows that the conclusions of Lemma \ref{lem:large_norm} also hold if we replace $S_1$ and $S_2$ by a large subset of themselves. This lemma is just a consequence of the triangle inequality because $\norm{\q_S^0 - \q_T^0}$ is small if $S \cap T$ is large.

\begin{lemma}\label{lem:large_norm2}
Let $S_1$ and $S_2$ be two disjoint subsets of vertices such that the cut $(S_i, V \setminus S_i)$ has conductance less than $\delta$ for $i  \in \{1, 2\}$. Suppose that $|S_1| + |S_2| \le \frac{2n}3$ and let $\Pi$ denote the projection onto the span of the eigenvectors of $\M$ with eigenvalue greater than $1-4\delta$. Let $\theta$ be a sufficiently small constant and let $T_i \subseteq S_i$ and $|T_i| \ge (1-\theta)|S_i|$ for each $i  \in \{1, 2\}$. Then $\Pi\q_{T_1}^0$ and $\Pi\q_{T_2}^0$ are $\left( \frac{1}{\sqrt{12}} - 2\sqrt{\theta} \right) \frac{1}{\sqrt{|S_1|+|S_2|}}$-far from collinear.
\end{lemma}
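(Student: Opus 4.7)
The plan is to bootstrap Lemma \ref{lem:large_norm} to $T_1, T_2$ via the triangle inequality, exploiting the fact that $\q_{T_i}^0$ differs from $\q_{S_i}^0$ by only a small vector when $T_i$ covers most of $S_i$. Using $\q_S^0 = \mathbf{1}_S/|S| - \One/n$, the difference $\q_{T_i}^0 - \q_{S_i}^0$ is supported on $S_i$, and a direct calculation yields $\norm{\q_{T_i}^0 - \q_{S_i}^0}^2 = (|S_i|-|T_i|)/(|S_i|\,|T_i|) \le \theta/((1-\theta)|S_i|)$; since $\Pi$ is an orthogonal projection, the same bound applies to $\norm{\Pi(\q_{T_i}^0 - \q_{S_i}^0)}$.

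Next I would revisit the proof of Lemma \ref{lem:large_norm} to extract the pointwise inequality
\begin{equation*}
\norm{\Pi(\alpha \q_{S_1}^0 + \beta \q_{S_2}^0)}^2 \ge \tfrac{3}{4}\left(\tfrac{\alpha^2}{|S_1|} + \tfrac{\beta^2}{|S_2|}\right) - \tfrac{1}{n},
\end{equation*}
which holds for all real $\alpha, \beta$ with $|\alpha|+|\beta|=1$, since the argument there only uses $(\alpha+\beta)^2 \le 1$ (valid here by the triangle inequality). Writing $M := \alpha^2/|S_1| + \beta^2/|S_2|$, Cauchy--Schwarz gives $M \ge 1/(|S_1|+|S_2|)$, and the hypothesis $|S_1|+|S_2| \le 2n/3$ forces $1/n \le 2M/3$, so $\norm{\Pi(\alpha \q_{S_1}^0 + \beta \q_{S_2}^0)} \ge \sqrt{M/12}$.

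Combining via triangle inequality and the Cauchy--Schwarz bound $|\alpha|/\sqrt{|S_1|} + |\beta|/\sqrt{|S_2|} \le \sqrt{2M}$, I obtain for any $(\alpha,\beta)$ with $|\alpha|+|\beta|=1$:
\begin{equation*}
\norm{\Pi(\alpha \q_{T_1}^0 + \beta \q_{T_2}^0)} \ge \sqrt{M}\left(\tfrac{1}{\sqrt{12}} - \sqrt{\tfrac{2\theta}{1-\theta}}\right) \ge \frac{1}{\sqrt{|S_1|+|S_2|}}\left(\tfrac{1}{\sqrt{12}} - 2\sqrt{\theta}\right),
\end{equation*}
using $\sqrt{2/(1-\theta)} \le 2$ for $\theta \le 1/2$ and $M \ge 1/(|S_1|+|S_2|)$. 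Since the condition $|\alpha|+|\beta|=1$ covers both the antipodal parametrization ($\beta=1-\alpha$, $\alpha \in [0,1]$) and the podal parametrization ($\beta=-(1-\alpha)$), the contrapositive of Lemma \ref{lem:line_proj2} certifies that $\Pi\q_{T_1}^0$ and $\Pi\q_{T_2}^0$ are far from both antipodal and podal by the claimed amount, hence far from collinear.

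The main subtlety is obtaining a uniform lower bound across all $(\alpha,\beta)$, since $\min(|S_1|,|S_2|)/(|S_1|+|S_2|)$ could be small in principle; what saves us is that both the main-term lower bound from Lemma \ref{lem:large_norm} and the triangle-inequality loss scale as $\sqrt{M}$, so the $(\alpha,\beta)$-dependence factors out cleanly and one applies the worst-case bound $M \ge 1/(|S_1|+|S_2|)$ only at the very end.
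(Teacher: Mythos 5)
Your proof is correct and follows essentially the same approach as the paper: compute $\norm{\q_{T_i}^0 - \q_{S_i}^0}$ directly, then use the triangle inequality in the heavy eigenspace to bootstrap the lower bound from Lemma~\ref{lem:large_norm}. One refinement worth noting: you keep both the main term and the error term explicitly proportional to $\sqrt{M}$ where $M = \alpha^2/|S_1| + \beta^2/|S_2|$, applying $M \ge 1/(|S_1|+|S_2|)$ only at the end; in the paper the error bound $\sqrt{2\theta M}$ is written as $2\sqrt{\theta S}$ with $S = 1/(|S_1|+|S_2|)$, which implicitly presupposes $M \le 2S$ — a step that is not valid uniformly in $(\alpha,\beta)$ when $|S_1|$ and $|S_2|$ differ greatly. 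Your factoring of $\sqrt{M}$ sidesteps this and is the cleaner way to close the loop.
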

\begin{proof}
Let $\alpha$ be any constant in $[0,1]$ and $\beta \in \{\alpha, 1-\alpha\}$. Using the fact that $|T_i| \ge (1-\theta)|S_i|$ for each $i  \in \{1, 2\}$, we can compute that
$$ \norm{\alpha \q_{S_1}^0 + \beta \q_{S_2}^0  - \left (\alpha \q_{T_1}^0  + \beta \q_{T_2}^0 \right )}^2 \le  \frac{\theta}{1-\theta} \left( \frac{\alpha^2}{s_1} + \frac{\beta^2}{s_2} \right ) \le  2 \theta \left( \frac{\alpha^2}{s_1} + \frac{\beta^2}{s_2} \right).$$
Write $\alpha \q^t_{S_1} + \beta \q^t_{S_2}  = \sum_i c_i \V_i$ and $\alpha \q^t_{T_1} + \beta \q^t_{T_2}= \sum_i w_i \V_i$ and let $H$ denote the set of eigenvalues larger than $1- 4\delta$ as in Lemma \ref{lem:large_norm}. We have
$$ \norm{\alpha \q_{S_1}^0 + \beta \q_{S_2}^0 - \alpha \q_{T_1}^0 - \beta \q_{T_2}^0}^2 \ge \sum_{i \in H} (c_i - w_i)^2.$$ Let $S = \frac{1}{s_1 + s_2}$. From Lemma \ref{lem:large_norm} and the triangle inequality,
\begin{align*}
 \sum_{i \in H}w_i^2 &> \left( \sqrt{\sum_{i \in H}c_i^2 } - \sqrt{ \sum_{i \in H} (c_i-w_i)^2 }  \right)^2 \\ 
 &> \left( \frac{\sqrt{S}}{\sqrt{12}}  - 2\sqrt{\theta S} \right)^2 \\
  &= S\left (\frac{1}{\sqrt{12}}- 2\sqrt{\theta} \right )^2. \qedhere
\end{align*}
\end{proof}
Lemma \ref{lem:large_norm2} states that under some conditions, $\Pi \q_{T_1}^0$ and $\Pi \q_{T_2}^0$ are $O\left(\frac{1}{\sqrt{|S_1| + |S_2|}}\right)$-far from collinear. Using this result, we show that we can find many \emph{pairs} of vertices $(u,v)$ where $\Pi \q_u^0$ and $\Pi \q_v^0$ are $O\left(\frac{1}{\sqrt{|S_1| + |S_2|}}\right)$-far from antipodal but not necessarily far from podal and vice versa. 

\begin{lemma} \label{lem:many_pairs1}
Let $S_1$ and $S_2$ be two disjoint subsets of vertices such that the cut $(S_i, V \setminus S_i)$ has conductance less than $\delta$ for $i  \in \{1, 2\}$. Suppose that $|S_1| + |S_2| \le \frac{2n}3$ and let $\Pi$ denote the projection onto the span of the eigenvectors of $\M$ with eigenvalue greater than $1-4\delta$. Let $\theta$ be a sufficiently small constant. There are $\theta^2 |S_1||S_2|$ pairs $(u,v)$ where $u \in S_1, v \in S_2$,  such that $\Pi\q_u^0$ and $\Pi \q_v^0$ are
$ \sqrt{\frac{C(\theta)}{|S_1|+|S_2|}}$-far from antipodal where $C(\theta) = \frac{1}{24} - 2 \theta - \frac{\sqrt{\theta}}{\sqrt{3}}$. There are also $\theta^2 |S_1||S_2|$ pairs $(u',v')$ such that $\Pi\q_u'^0$ and $\Pi \q_v'^0$ are $\sqrt{\frac{C(\theta)}{|S_1|+|S_2|}}$-far from podal.
\end{lemma}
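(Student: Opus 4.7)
The plan is to argue by contradiction. Suppose that fewer than $\theta^2|S_1||S_2|$ pairs $(u,v)\in S_1\times S_2$ have $\Pi \q_u^0$ and $\Pi \q_v^0$ that are $\sqrt{C(\theta)/(|S_1|+|S_2|)}$-far from antipodal; equivalently, at least $(1-\theta^2)|S_1||S_2|$ pairs satisfy Eq.~\eqref{eq:antipodal} of Lemma~\ref{lem:line_proj2} with $\epsilon := \sqrt{C(\theta)/(|S_1|+|S_2|)}$. First I would apply Markov's inequality on each side to extract subsets $T_1\subseteq S_1$ and $T_2\subseteq S_2$ with $|T_i|\ge (1-\theta)|S_i|$ such that every $u\in T_1$ has at most $\theta|S_2|$ far-from-antipodal partners in $S_2$ (and symmetrically for $T_2$). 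Consequently the fraction of pairs inside $T_1\times T_2$ that are close to antipodal is at least $1-O(\theta)$.

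Next I would invoke Lemma~\ref{lem:large_norm2} with $(T_1,T_2)$ in place of $(S_1,S_2)$: the aggregates $\Pi \q_{T_1}^0$ and $\Pi \q_{T_2}^0$ are $\left(\tfrac{1}{\sqrt{12}}-2\sqrt\theta\right)/\sqrt{|S_1|+|S_2|}$-far from collinear, hence in particular from antipodal, so for every $\alpha\in[0,1]$,
\[
\|\alpha\,\Pi \q_{T_1}^0 + (1-\alpha)\,\Pi \q_{T_2}^0\|^2 \;\ge\; \frac{(1/\sqrt{12}-2\sqrt\theta)^2}{|S_1|+|S_2|}.
\]
To contradict this, I plan to show that the same aggregate is in fact close to antipodal. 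For each close-to-antipodal pair let $\alpha_{u,v}\in[0,1]$ be a minimizer of $\|\alpha\,\Pi \q_u^0+(1-\alpha)\,\Pi \q_v^0\|$, which has value at most $\epsilon$. I would partition $[0,1]$ into $\lceil 1/\sqrt\theta\rceil$ intervals of width $\sqrt\theta$; by the pigeonhole principle one interval centered at some $\alpha^\ast$ contains at least a $\sqrt\theta$-fraction of the close pairs. Since $\alpha\mapsto \|\alpha\mathbf{a}+(1-\alpha)\mathbf{b}\|^2$ is a quadratic in $\alpha$ with leading coefficient $\|\mathbf{a}-\mathbf{b}\|^2\le 4$, every pair $(u,v)$ in this bucket satisfies
\[
\|\alpha^\ast\,\Pi \q_u^0 + (1-\alpha^\ast)\,\Pi \q_v^0\| \;\le\; \epsilon + 2\sqrt\theta,
\]
so the per-pair bound now uses a common $\alpha^\ast$.

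Finally I would sum these inequalities over the bucket and compare the resulting (weighted) sum with $\alpha^\ast\,\Pi \q_{T_1}^0 + (1-\alpha^\ast)\,\Pi \q_{T_2}^0$. Combining with the $O(\theta)$ fraction of non-close pairs and a Cauchy--Schwarz variance bound on the per-vertex bucket degrees accounts for the non-uniformity and yields an additional loss of $O\!\left(\sqrt\theta/\sqrt{|S_1|+|S_2|}\right)$. After unpacking $C(\theta)=\tfrac{1}{24}-2\theta-\sqrt\theta/\sqrt 3$ and squaring, the bound $\|\alpha^\ast\Pi \q_{T_1}^0 + (1-\alpha^\ast)\Pi \q_{T_2}^0\|\le \epsilon + O(\sqrt\theta/\sqrt{|S_1|+|S_2|})$ is incompatible with the Lemma~\ref{lem:large_norm2} lower bound above, producing the contradiction. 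The podal version is proved by the identical argument applied to $\Pi \q_u^0$ and $-\Pi \q_v^0$. The main obstacle is this last aggregation step: the $\alpha^\ast$-bucket of close pairs is an essentially arbitrary, non-rectangular subset of $T_1\times T_2$, so turning the common-$\alpha^\ast$ per-pair bounds into a bound on $\alpha^\ast\,\Pi \q_{T_1}^0+(1-\alpha^\ast)\,\Pi \q_{T_2}^0$ requires carefully tracking the loss from irregular per-vertex bucket degrees, which is what the precise form of $C(\theta)$ records.
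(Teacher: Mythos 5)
Your proposal follows the paper up through the double pigeonhole step (extracting large subsets where nearly every pair is close to antipodal) and through the invocation of Lemma~\ref{lem:large_norm2} for the contradiction. The difference is in how you pass from per-pair closeness to aggregate closeness, and there the proposal breaks down.

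The critical gap is in the $\alpha$-bucketing step. You partition $[0,1]$ into intervals of width $\sqrt\theta$ and replace each pair's optimal $\alpha_{u,v}$ with a common $\alpha^\ast$. Since $f(\alpha) = \norm{\alpha\mathbf{a} + (1-\alpha)\mathbf{b}}^2$ has derivative bounded only by an absolute constant (about $4$, using $\norm{\Pi\q_u^0}\le 1$), shifting $\alpha$ by $\sqrt\theta$ changes $f$ by up to $4\sqrt\theta$, a \emph{constant}. This gives $\norm{\alpha^\ast\Pi\q_u^0 + (1-\alpha^\ast)\Pi\q_v^0} \le \sqrt{\epsilon^2 + 4\sqrt\theta}$, which is $\Theta(\theta^{1/4})$, a quantity independent of $|S_1|+|S_2|$. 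But the whole argument lives at the scale $\epsilon = \Theta\bigl(1/\sqrt{|S_1|+|S_2|}\bigr)$: the lower bound from Lemma~\ref{lem:large_norm2} is $\bigl(\tfrac1{\sqrt{12}}-2\sqrt\theta\bigr)/\sqrt{|S_1|+|S_2|}$. A constant additive loss from the bucketing swamps this entirely; the asserted ``additional loss of $O(\sqrt\theta/\sqrt{|S_1|+|S_2|})$'' is not supported by the computation and is in fact false. Tightening the bucket width to restore the correct scale (width $\approx \epsilon^2 \approx 1/(|S_1|+|S_2|)$) forces roughly $|S_1|+|S_2|$ buckets, so the best bucket captures only $O(|S_1||S_2|/(|S_1|+|S_2|))$ pairs, and these pairs form an irregular bipartite graph whose averaged sum need not track $\alpha^\ast\Pi\q_{T_1}^0+(1-\alpha^\ast)\Pi\q_{T_2}^0$; averaging over so few pairs cannot be compared against Lemma~\ref{lem:large_norm2} without further ideas. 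This is exactly the ``main obstacle'' you yourself flag at the end, and it is not a matter of bookkeeping --- $C(\theta)$ cannot absorb a loss that does not shrink as $|S_1|+|S_2|$ grows.

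The paper avoids a common $\alpha$ entirely by observing that \emph{the set of vectors $x$ such that a fixed $\mathbf{a}$ and $x$ are $\epsilon$-close to antipodal is convex}. With this, averaging is free: for fixed $u$, all $\Pi\q_v^0$ with $v\in T_u$ lie in this convex region, hence so does the centroid $\Pi\q_{T_u}^0$, with the \emph{same} $\epsilon$ and no bucketing loss. Then a single triangle-inequality step replaces $T_u$ by $S_2$ (costing $O(\sqrt{\theta/(|S_1|+|S_2|)})$, which does shrink with the scale), and a second application of convexity averages over $u\in T$ to reach $\Pi\q_T^0$, producing the contradiction with Lemma~\ref{lem:large_norm2}. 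Identifying this convexity is the key missing idea in your proposal; without it the aggregation step cannot be closed at the right scale.
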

\begin{proof}
We first consider the antipodal case. Suppose for the sake of contradiction that there are more than $(1-\theta^2)|S_1||S_2|$ pairs of vertices $(u,v)$ such that $\Pi \q_u^0$ and $\Pi \q_v^0$ are $\sqrt{\frac{C(\theta)}{|S_1|+|S_2|}}$-close to antipodal. We now show that there is a set $T \subseteq S_1$ where $|T| \ge (1-\theta) |S_1|$ such that for all $u \in T$, there are more than $(1-\theta)|S_2|$ vertices $v$ in $S_2$ such that $\Pi \q_u^0$ and $\Pi \q_v^0$ are $\sqrt{\frac{C(\theta)}{|S_1|+|S_2|}}$-close from antipodal. This must be true because otherwise, the number of pairs that are close to antipodal is at most 
$$ \theta |S_1||S_2| + (1-\theta)^2|S_1||S_2| < (1-\theta^2)|S_1||S_2|.$$ 
 Hence, such a set $T$ must exist. Now for every $u \in T$,  let $T_u$ denote the set of vertices in $S_2$ such that $\Pi \q_u^0$ and $\Pi\q_v^0$ are $\sqrt{\frac{C(\theta)}{|S_1|+|S_2|}}$-close to antipodal for all $v \in T_u$. 
 
 We now claim that $\Pi \q_u^0$ and $\Pi \q_{T_u}^0$ are $\sqrt{\frac{C(\theta)}{|S_1|+|S_2|}}$-close to antipodal. By Lemma \ref{lem:line_proj2}, the set of points $x \in \mathbb{R}^n$ such that $\Pi \q_u^0$ and $x$ are $\epsilon$-close to antipodal is a convex region. Because $\Pi \q_v^0$ lies in this convex region for all $v \in T_u$,  so does the average vector $\Pi \q_{T_u}^0$. Therefore, it follows that $\Pi \q_u^0$ and $\Pi \q_{T_u}^0$ are $\sqrt{\frac{C(\theta)}{|S_1|+|S_2|}}$-close to antipodal. We now show that $\Pi \q_u^0$ and $\Pi \q_{S_2}^0$ are $\left( \frac{1}{\sqrt{12}} - 2 \sqrt{\theta} \right)\frac{1}{\sqrt{|S_1|+|S_2|}}$-close to antipodal. To show this, we use the triangle inequality which gives us
\begin{align*}
    \norm{\alpha \Pi \q_u^0 + (1-\alpha) \Pi \q_{S_2}^0 } &= \norm{\alpha \Pi \q_u^0 + (1-\alpha) \Pi \q_{T_u}^0  + (1-\alpha) \Pi ( \q_{S_2}^0 - q_{T_u}^0) }  \\
    &\le \norm{\alpha \Pi \q_u^0+ (1-\alpha) \Pi \q_{T_u}^0} + |(1-\alpha)| \norm{\q_{T_u}^0- \q_{S_2}^0}.
\end{align*}
We first bound the second term.
$$ |(1-\alpha) |\norm{\p_{S_2}^0-\p_{T_u}^0}^2 =  \frac{\theta}{(1-\theta)|S_2|} \le \frac{2\theta}{|S_2|} \le \frac{4 \theta}{|S_1| + |S_2|}.$$
Then using the fact that $\Pi \q_u^0$ and $\Pi \q_{T_u}^0$ are $\sqrt{\frac{C(\theta)}{|S_1|+|S_2|}}$-close to antipodal, we have 
\begin{align*}
  \min_{0 \le \alpha \le 1}\norm{\alpha \Pi \q_u^0 + \beta \Pi \q_{S_2}^0}^2 &\le \frac{2C(\theta)}{|S_1|+|S_2|} + \frac{8 \theta}{|S_1| + |S_2|}\\
  &=  \left( \frac{1}{\sqrt{12}} - 2 \sqrt{\theta} \right)^2\frac{1}{|S_1|+|S_2|}
\end{align*}
which precisely means that $\Pi \q_u^0$ and $\Pi \q_{S_2}^0$ are $\left( \frac{1}{\sqrt{12}} - 2 \sqrt{\theta} \right)\frac{1}{\sqrt{|S_1|+|S_2|}}$-close to antipodal. Note that $u$ was an arbitrary vertex in $T$. Therefore using the same convexity argument as above, we know that $\Pi \q_{T}^0$ and $\Pi \q_{S_2}^0$ are $\left( \frac{1}{\sqrt{12}} - 2 \sqrt{\theta} \right)\frac{1}{\sqrt{|S_1|+|S_2|}}$-close antipodal.
 However, this is a contradiction to Lemma \ref{lem:large_norm2} so we are done. Hence, there must be at least $\theta^2|S_1||S_2|$ pairs $(u,v)$ where $u \in S_1, v \in S_2$,  such that $\Pi\q_u^0$ and $\Pi \q_v^0$ are $\frac{C(\theta)}{|S_1|+|S_2|}$-far from antipodal. The podal case follows similarly.
\end{proof}

The goal now is to extend Lemma \ref{lem:many_pairs1} to say that that we can find sufficiently many pairs $(u,v)$ such that $\Pi \q_u^0$ and $\Pi \q_v^0$ are far from both antipodal \emph{and} podal (which means that $\Pi \q_u^0$ and $\Pi \q_v^0$ are far from collinear). We do this in Lemma \ref{lem:many_pairs2} but we first present the following supplementary lemma which tells us the conditions under which we can find many pairs of vectors that are far from collinear.

\begin{lemma} \label{lem:supplementary}
Let $\q$ be a vector and let $S$ be a set of vectors such that for all vectors $\mathbf{r} \in S$, $\q$ and $\mathbf{r}$ are $\epsilon$-close to antipodal or $\q$ and $\mathbf{r}$ are $\epsilon$-close to podal. Then for all $\theta \le \frac{1}{100}$, one of the following three cases must occur. Recall that $\q_{S} = \frac{1}{|S|}\sum_{\mathbf{r} \in S} \mathbf{r}$.

\begin{enumerate}
    \item There is a set $S' \subseteq S$ such that for all $\mathbf{r} \in S'$, $\q$ and $\mathbf{r}$ are $\frac{\epsilon}2$-far from collinear and $|S'| = \theta|S|$.
    \item $\q$ and $\q_S$ are $\left(\epsilon + \sqrt{\frac{2 \theta}{|S|}}\right)$-close to antipodal or $\q$ and $\q_S$ are $\left(\epsilon + \sqrt{\frac{2 \theta}{|S|}}\right)$-close to podal (possibly both).
    \item We can find $\theta^4|S|^2$ pairs $(\q_1', \q_2')$ where $\q_1', \q_2' \in S$ such that $\q_1'$ and $\q_2'$ are $\frac{\epsilon}2$-far from collinear.
    \end{enumerate}
\end{lemma}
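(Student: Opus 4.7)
Proof plan.

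The plan is a case analysis based on whether case 1 holds. If it does, we are done; otherwise at most $\theta|S|$ vectors in $S$ are $\epsilon/2$-far from collinear with $\mathbf{q}$, and I partition $S = S_a \sqcup S_p$ by hypothesis according to whether each $\mathbf{r}$ is $\epsilon$-close to antipodal or $\epsilon$-close to podal with $\mathbf{q}$ (breaking ties arbitrarily). The central tool is the reformulation from Lemma~\ref{lem:line_proj2}: the sets $\{\mathbf{r} : \min_{0 \le \alpha \le 1} \|\alpha \mathbf{q} + (1-\alpha)\mathbf{r}\| \le \epsilon\}$ and $\{\mathbf{r} : \min_{0 \le \alpha \le 1} \|\alpha \mathbf{q} + (1-\alpha)(-\mathbf{r})\| \le \epsilon\}$ are each convex, being projections onto the $\mathbf{r}$-coordinate of the convex set of $(\mathbf{r},\alpha)$ pairs satisfying the inequality. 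Consequently, the averages $\mathbf{q}_{S_a}, \mathbf{q}_{S_p}$ inherit the respective $\epsilon$-close to antipodal and $\epsilon$-close to podal properties with $\mathbf{q}$.

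In the imbalanced sub-case, one of the two parts is small; WLOG $|S_p| \le \theta|S|$. The identity $\mathbf{q}_S - \mathbf{q}_{S_a} = \frac{|S_p|}{|S|}(\mathbf{q}_{S_p} - \mathbf{q}_{S_a})$ lets me bound the perturbation by $\sqrt{2\theta/|S|}$ via Cauchy--Schwarz applied to the sums defining $\mathbf{q}_{S_a}, \mathbf{q}_{S_p}$ (this is the source of the $\sqrt{2\theta/|S|}$ in case 2). Combined with $\mathbf{q}_{S_a}$ being $\epsilon$-close to antipodal with $\mathbf{q}$ and a triangle-inequality style argument mirroring the proof of Lemma~\ref{lem:line_proj2}, this delivers case 2.

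In the balanced sub-case, both $|S_a|, |S_p| \ge \theta|S|$, so $S_a \times S_p$ contains at least $\theta^2|S|^2$ ordered pairs. I claim that at least a $\theta^2$ fraction of them are $\epsilon/2$-far from collinear with each other, giving the $\theta^4|S|^2$ pairs required by case 3. Geometrically, each $\mathbf{r}_1 \in S_a$ lies close to the line $L_\mathbf{q}$ through the origin and $\mathbf{q}$ on the opposite side of the origin from $\mathbf{q}$, and each $\mathbf{r}_2 \in S_p$ lies close to $L_\mathbf{q}$ on the same side; but their perpendicular components to $L_\mathbf{q}$ generically reinforce rather than cancel, forcing any common line through the origin that fits both to require movement of order $\epsilon$ per vector. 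Pigeonholing over the (coarse) direction of the perpendicular components inside $S_a$ and $S_p$ separately gives the claimed $\theta^4$ fraction.

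The main obstacle will be the balanced sub-case: not every pair in $S_a \times S_p$ is $\epsilon/2$-far from collinear --- pairs whose perpendicular components to $L_\mathbf{q}$ antialign can remain close to collinear with each other even when both vectors are far off that line --- so the $\theta^4$ count requires a careful pigeonhole over perpendicular directions rather than a naive product bound. The imbalanced case, by contrast, reduces to routine norm estimates once the convex-set reformulation of antipodal/podal is in hand.
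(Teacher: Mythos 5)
There is a genuine gap in the balanced sub-case. Your plan asserts that whenever $|S_a|, |S_p| \ge \theta|S|$ you can always land in case~3, but this is false: if $\q$ is a unit vector and every $\mathbf{r} \in S$ has $\norm{\mathbf{r}} \le \epsilon/2$, then every $\mathbf{r}$ is $\epsilon$-close to \emph{both} antipodal and podal with $\q$, so your arbitrary tie-breaking can produce a perfectly balanced split $S_a,S_p$ with $|S_a|=|S_p|=|S|/2$. Yet every pair $(\mathbf{r}_1,\mathbf{r}_2)$ from $S$ is $\epsilon/2$-close to collinear (translate both to the origin), so zero pairs qualify for case~3. In this situation case~2 is the one that holds (since $\norm{\q_S}\le \epsilon/2$), but your balanced branch never reaches a case~2 conclusion, only a case~3 one. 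The paper's proof avoids this by (i) using a \emph{three}-way partition $S_1 \cup S_2 \cup S_3$ in which $S_3$ is the set of vectors close to both antipodal and podal, and by (ii) further splitting the balanced case on whether the aggregate $\q_{S_1\cup S_2}$ lies inside or outside the region of doubly-close vectors: if it is inside, one is again in case~2; only if it is outside does one get case~3. Your two-way split with arbitrary tie-breaking collapses this distinction and makes the troublesome inputs invisible.

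Separately, the ``pigeonhole over the coarse direction of perpendicular components'' that you propose as a patch is not well-formed in the ambient dimension: the perpendicular component to $L_\q$ lives in an $(n-1)$-dimensional space, so covering directions by a constant number of buckets is not available. The paper sidesteps this entirely by working with the aggregate vectors $\q_{S_1},\q_{S_2}$ (convexity of the $\epsilon$-close-to-antipodal region makes this legitimate), deriving that they are $\epsilon/2$-far from antipodal when the aggregate lies outside the shaded region, pigeonholing to promote this to pairs of elements of $S_1\times S_2$ (as in Lemma~\ref{lem:many_pairs1}), and then using a purely planar geometric argument (Figure~\ref{fig:supplement}) to show those pairs are automatically $\epsilon/2$-far from podal as well. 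Your imbalanced case and your use of convexity of the antipodal/podal regions match the paper's and are fine; the balanced case needs the extra dichotomy on the aggregate's position before the conclusion can be drawn.
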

\begin{proof}
Consider Figure \ref{fig:supplement} along with supplementary Figures \ref{fig:case1}-\ref{fig:case2b}. If there are at least $\theta |S|$ elements of $S$ in the shaded region of Figure \ref{fig:case1}, then we are in case $1$. This is because for every $\mathbf{r}$ in this shaded region of Figure \ref{fig:case1}, both of the line segments from $\q$ to $\mathbf{r}$ and from $-\q$ to $\mathbf{r}$ do not intersect the sphere of radius $\frac{\epsilon}2$ centered at the origin. 

If we are not in case $1$ then we know that greater than $(1-\theta)|S|$ elements of $S$ that lie completely inside the shaded region in Figure \ref{fig:case2a}. We now partition $S$ into three disjoint sets $S_1 \cup S_2 \cup S_3$ where $S_1$ is the set of all $\mathbf{r} \in S$ where $\q$ and $\mathbf{r}$ are $\epsilon$-close to antipodal (but $\epsilon$-far from podal), $S_2$ is the set of all $\mathbf{r} \in S$ where $\q$ and $\mathbf{r}$ are  $\epsilon$-close to podal (but $\epsilon$-far from antipodal), and $S_3$ is the set of all $\mathbf{r}$ in $S$ where $\q$ and $\mathbf{r}$ are $\epsilon$-close to both antipodal and podal.

The geometry implied by the definitions of antipodal and podal means that all of the elements of $S_3$ lie in the shaded region in Figure \ref{fig:case2b}. Therefore, if $|S_3| \ge (1-\theta)|S|$, we can use the triangle inequality as in Lemma \ref{lem:many_pairs1} to bound $\norm{\q_S - \q_{S_3}}$ to show that $\q$ and $\q_S$ are $\left(\epsilon + \sqrt{\frac{2 \theta}{|S|}}\right)$-close to antipodal and podal. Similarly, if either $|S_1| \ge (1-\theta)|S|$ or $|S_2| \ge (1-\theta)|S|$, we can again use the triangle inequality to show that $\q$ and $\q_S$ are $\left(\epsilon + \sqrt{\frac{2 \theta}{|S|}}\right)$-close to antipodal or $\q$ and $\q_S$ are $\left(\epsilon + \sqrt{\frac{2 \theta}{|S|}}\right)$-close to podal which means that we are in case $2$.

Therefore, we can now assume that both $|S_1| \ge \theta |S|$ and $|S_2| \ge \theta |S|$. We now consider the point $\q_{S_1 \cup S_2}$. If this point lines outside the shaded region in Figure \ref{fig:case2a}, then we know that the line segment connecting $\q_{S_1}$ and $\q_{S_2}$ lies outside the shaded region of Figure \ref{fig:case2a} at some point. By our geometric construction, this implies that the entire line segment does not intersect the circle centered at the origin with radius $\frac{\epsilon}2$. Thus, we have that $\q_{S_1}$ and $\q_{S_2}$ are $\frac{\epsilon}2$-far from antipodal. Then by Lemma \ref{lem:many_pairs1}, we know that there are $\theta^2|S_1||S_2|$ pairs $(\q_1', \q_2')$ where $\q_1' \in S_1$ and $\q_2' \in S_2$ such that each pair is also $\frac{\epsilon}2$-far from antipodal. 

We now show that each such pair $(\q_1', \q_2')$ is also $\frac{\epsilon}2$-far from podal. We claim that the line segment connecting $\q_2'$ and $-\q_1'$ cannot intersect the circle of radius $\frac{\epsilon}2$ centered at the origin. This is because all of the points in $S_1$ and $S_2$ have to lie inside the shaded region of Figure \ref{fig:case2a} and these points cannot lie inside the circle of radius $\epsilon$. Thus, the closest the line segment connecting $-\q_{2}'$ and $\q_{1}'$ can come to the circle of radius $\frac{\epsilon}2$ is if $\q_{1}'$ coincides with  the point $E$ and $-\q_{2}'$ coincides with the point $H$ in Figure \ref{fig:supplement}. In this scenario, it is clear from Figure \ref{fig:supplement} that this line segment does not intersect the circle of radius $\frac{\epsilon}2$. Therefore in this case we can find $\theta^2|S_1||S_2|$ many pairs $(\q_1',\q_2')$ that are $\frac{\epsilon}2$-far from antipodal and podal which means we are in case $3$. 

We now consider the case $\q_{S_1 \cup S_2}$ lies inside the shaded region in Figure \ref{fig:case2a}. This implies that the point $\q_S$ also lies inside the shaded region in Figure \ref{fig:case2a}. Therefore, the points $\q$ and $\q_S$ are $\epsilon$-close to antipodal or they are $\epsilon$-close podal. This precisely means that we are again case $2$. \qedhere

\begin{figure}[!htbp]
    \centering
    \includegraphics[scale=0.5]{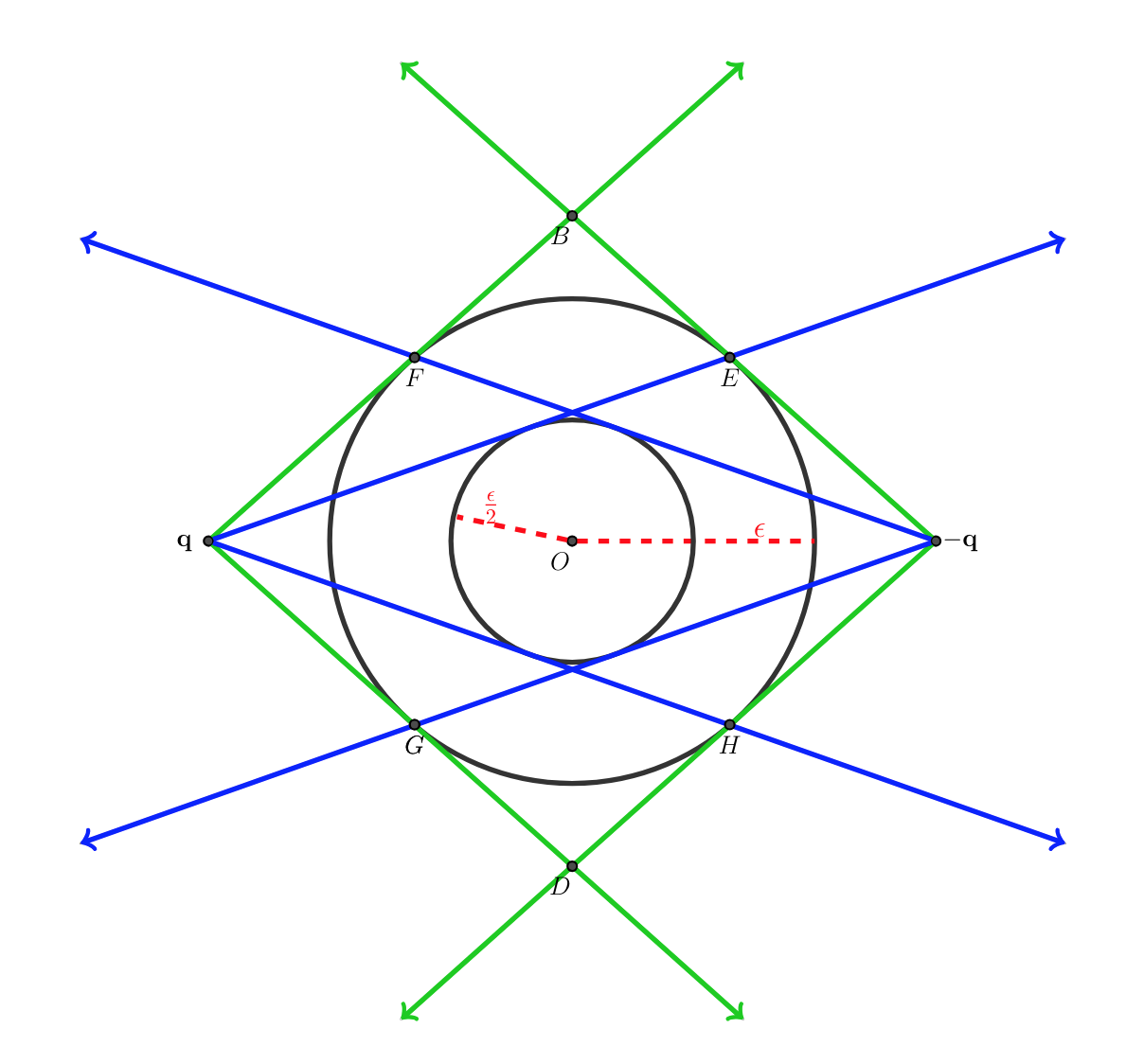}
    \caption{Diagram for Lemma \ref{lem:supplementary}.}
    \label{fig:supplement}
\end{figure}
\begin{figure}[!htbp]
\begin{minipage}{.5\linewidth}
\centering
\subfloat[]{\label{fig:case1}\includegraphics[scale=.35]{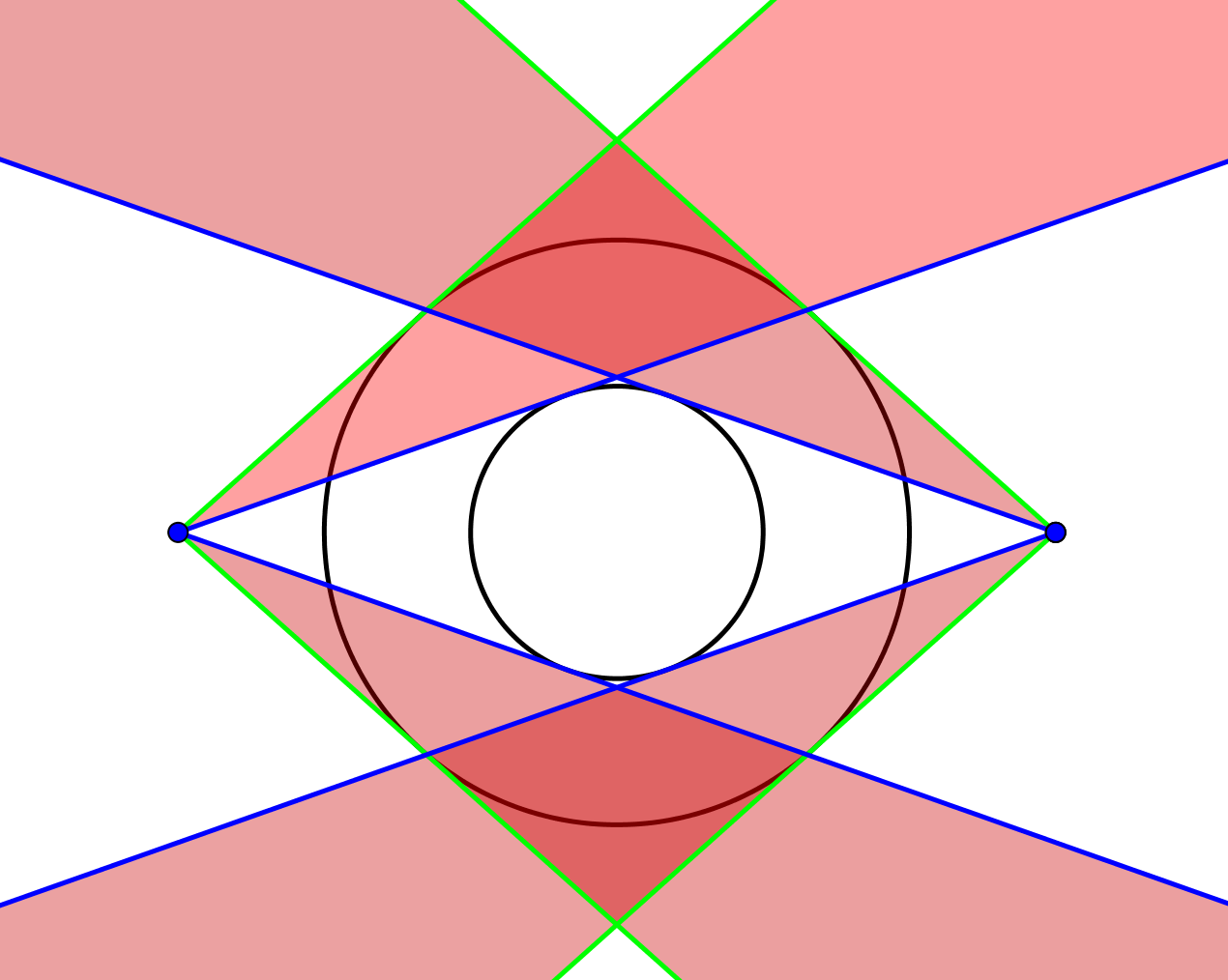}}
\end{minipage}%
\begin{minipage}{.5\linewidth}
\centering
\subfloat[]{\label{fig:case2a}\includegraphics[scale=.35]{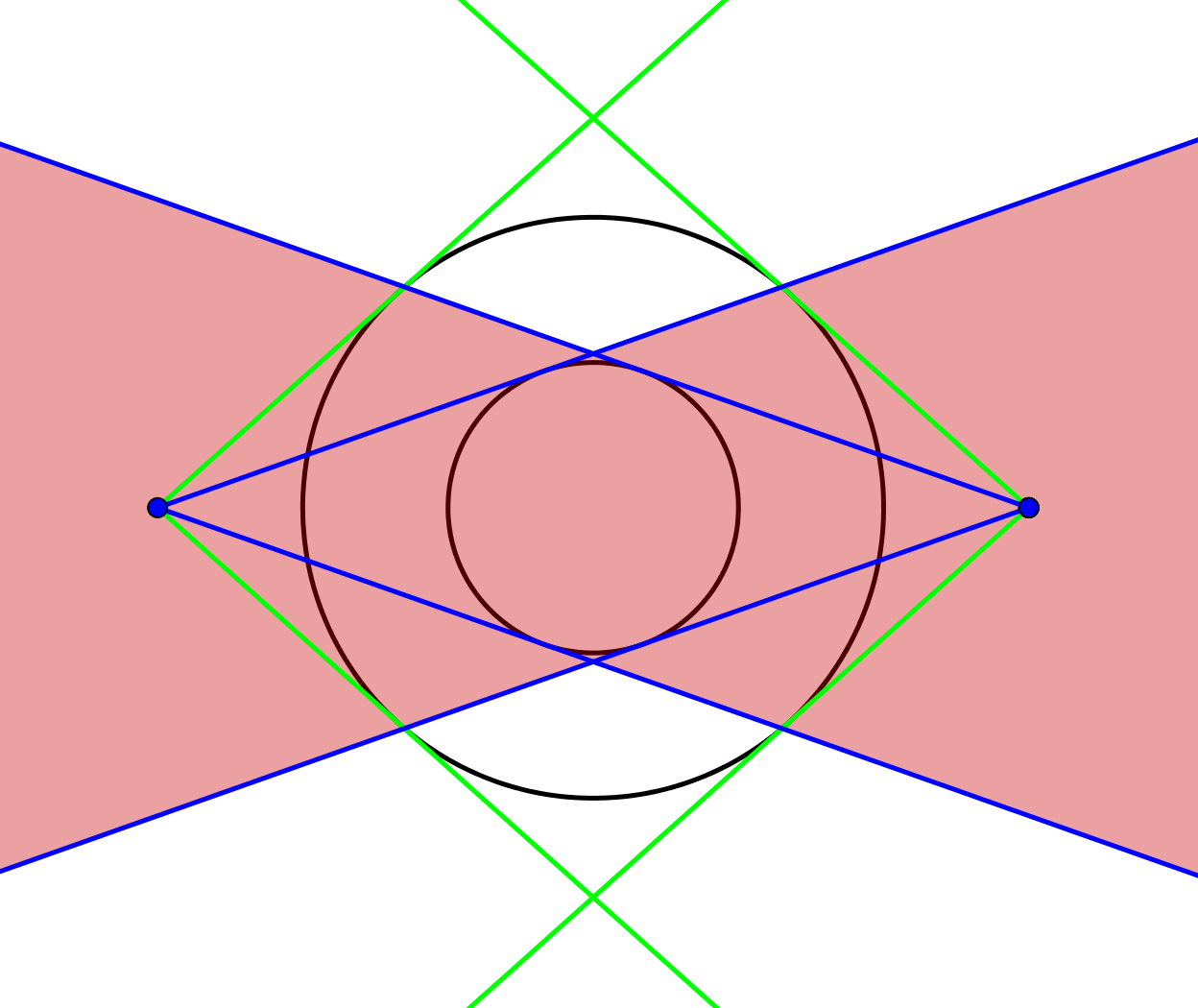}}
\end{minipage}\par\medskip
\centering
\subfloat[]{\label{fig:case2b}\includegraphics[scale=.35]{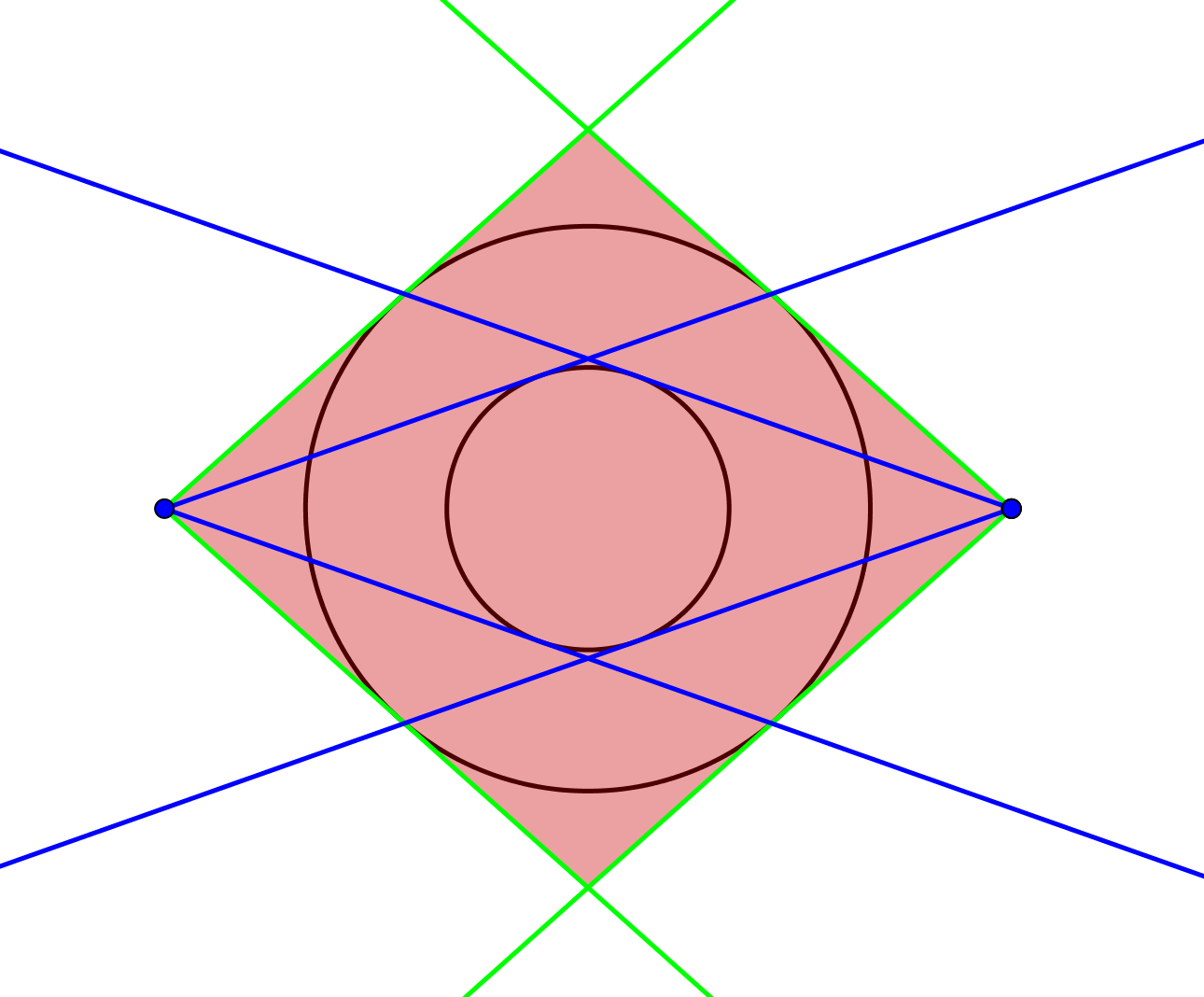}}

\caption{See Figure \ref{fig:supplement} for reference. Various regions corresponding to the cases in Lemma \ref{lem:supplementary} are shaded.}
\label{fig:main}
\end{figure}
\end{proof}
Using Lemma \ref{lem:supplementary} as a stepping stone, we can now extend Lemma \ref{lem:many_pairs1} to prove that there are $\Theta(n^2)$ pairs of vertices $(u,v)$ where $\Pi\q_u^0$ and $\Pi\q_v^0$ are $O\left(\frac{1}{\sqrt{|S_1| + |S_2|}}\right)$-far from collinear.

\begin{lemma} \label{lem:many_pairs2}
Let $S_1$ and $S_2$ be two disjoint subsets of vertices such that the cut $(S_i, V \setminus S_i)$ has conductance less than $\delta$ for $i  \in \{1, 2\}$. Suppose that $|S_1| + |S_2| \le \frac{2n}3$ and let $\Pi$ denote the projection onto the span of the eigenvectors of $\M$ with eigenvalue greater than $1-4\delta$. Let $\theta$ be a sufficiently small constant. Then there are constants $C_1(\theta), C_2(\theta)$ that only depend on $\theta$ such that there are at least $ C_1(\theta) n^2$ pairs of distinct vertices $(u,v)$ such that $\Pi\q_u^0$ and $\Pi\q_v^0$ are $\sqrt{\frac{C_2(\theta)}{|S_1| + |S_2|}}$-far from collinear.
\end{lemma}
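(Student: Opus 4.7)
My plan is to argue by contradiction. Fix $\epsilon = 2\sqrt{C_2(\theta)/(|S_1|+|S_2|)}$ so that $\epsilon/2$ is the intended threshold, and suppose for contradiction that fewer than $C_1(\theta)|S_1||S_2|$ pairs $(u,v) \in S_1 \times S_2$ satisfy that $\Pi\q_u^0$ and $\Pi\q_v^0$ are $\epsilon/2$-far from collinear. A standard pigeonhole argument then produces a subset $T \subseteq S_1$ with $|T| \ge (1-O(\sqrt{\theta}))|S_1|$, and for each $u \in T$ a subset $S_u \subseteq S_2$ with $|S_u| \ge (1-O(\sqrt{\theta}))|S_2|$, such that $\Pi\q_u^0$ and $\Pi\q_v^0$ are $\epsilon$-close to collinear for every $v \in S_u$.

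For each such $u$ I apply Lemma \ref{lem:supplementary} with $\q = \Pi\q_u^0$, $S = \{\Pi\q_v^0 : v \in S_u\}$, and threshold $\epsilon$. If case~1 of that lemma holds for a $\theta$-fraction of $u \in T$, we immediately find $\Omega(|T|\cdot\theta|S_u|) = \Omega(n^2)$ pairs in $S_1 \times S_2$ that are $\epsilon/2$-far from collinear, contradicting the assumption. If case~3 holds for even a single $u$, we find $\Omega(|S_u|^2) = \Omega(n^2)$ such pairs inside $S_2 \times S_2$, again a contradiction. Hence case~2 must hold for all but a $\theta$-fraction of $u \in T$; call this subset $T''$. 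The triangle-inequality computation from Lemma \ref{lem:many_pairs1} (using that $|S_u|$ is within an $O(\theta)$-factor of $|S_2|$) transfers each relation from $\Pi\q_{S_u}^0$ to $\Pi\q_{S_2}^0$, so every $u \in T''$ has $\Pi\q_u^0$ close to antipodal or close to podal with $\Pi\q_{S_2}^0$.

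I then partition $T'' = T''_a \sqcup T''_p$ according to which alternative holds. The convexity in the free argument of each of the sets ``close to antipodal with $\Pi\q_{S_2}^0$'' and ``close to podal with $\Pi\q_{S_2}^0$'' (as used in Lemma \ref{lem:many_pairs1}) implies that $\Pi\q_{T''_a}^0$ is close to antipodal with $\Pi\q_{S_2}^0$ and $\Pi\q_{T''_p}^0$ is close to podal with $\Pi\q_{S_2}^0$. Geometrically these two averages lie near the two opposite rays of the line $L$ spanned by $\Pi\q_{S_2}^0$; since distance to $L$ is a convex function, the combined average $\Pi\q_{T''}^0$ still lies close to $L$, so $\Pi\q_{T''}^0$ and $\Pi\q_{S_2}^0$ are close to collinear. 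Because $T''$ covers all but a small fraction of $S_1$, the triangle inequality then gives that $\Pi\q_{S_1}^0$ is also close to collinear with $\Pi\q_{S_2}^0$, contradicting Lemma \ref{lem:large_norm2} applied with $T_1 = S_1,\, T_2 = S_2$ once the constants $C_1, C_2, \theta$ are chosen small enough.

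The main technical obstacle I expect is the quantitative geometric bookkeeping in the third paragraph: converting ``close to antipodal with $\Pi\q_{S_2}^0$'' into a usable perpendicular-distance bound to the line $L$ for an individual vector $\Pi\q_u^0$ carries a multiplicative factor roughly $\norm{\Pi\q_u^0}/\norm{\Pi\q_{S_2}^0}$, so the lower bound $\norm{\Pi\q_{S_2}^0} \ge 1/\sqrt{12(|S_1|+|S_2|)}$ (obtained from Lemma \ref{lem:large_norm} with $\alpha = 0, \beta = 1$) must be used, and the resulting collinearity threshold for $\Pi\q_{S_1}^0$ must still beat the $\left(\frac{1}{\sqrt{12}} - 2\sqrt{\theta}\right)/\sqrt{|S_1|+|S_2|}$ lower bound from Lemma \ref{lem:large_norm2}. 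Arranging the constants so that the errors from the triangle inequality, the shift $S_u \to S_2$, and the combination of $T''_a$ with $T''_p$ all sum to something strictly smaller than this threshold, while correctly handling the non-mutually-exclusive three cases of Lemma \ref{lem:supplementary}, is the delicate part of the proof.
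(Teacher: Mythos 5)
Your first two steps track the paper's proof closely: both argue by contradiction, pigeonhole to get $T \subseteq S_1$ with associated sets $T_u \subseteq S_2$, and then invoke Lemma~\ref{lem:supplementary} with $\q = \Pi\q_u^0$ for each $u \in T$, disposing of cases~1 and~3 immediately. The divergence is in how the ``case~2 for almost all $u$'' situation is handled, and it is there that your argument has a real gap.

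After case 2 holds for almost all $u \in T$, you split $T''$ into $T''_a$ (close to antipodal with $\Pi\q_{S_2}^0$) and $T''_p$ (close to podal), apply convexity separately within each class, and then try to conclude that the combined average $\Pi\q_{T''}^0$ is close to the line $L$ spanned by $\Pi\q_{S_2}^0$. This step does not go through. Being $\epsilon$-close to antipodal (or podal) with $\Pi\q_{S_2}^0$ is a statement about the segment from $\Pi\q_u^0$ to $\pm\Pi\q_{S_2}^0$ passing near the origin, \emph{not} about the perpendicular distance of $\Pi\q_u^0$ from $L$. You correctly note that the translation between the two costs a factor of roughly $\norm{\Pi\q_u^0}/\norm{\Pi\q_{S_2}^0}$, but you treat this as a constant to be absorbed. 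It is not: $\norm{\Pi\q_u^0}$ can be $\Theta(1)$ (since $\norm{\q_u^0}=\sqrt{1-1/n}$), while Lemma~\ref{lem:large_norm} only gives $\norm{\Pi\q_{S_2}^0}^2 \ge \frac{1}{12(|S_1|+|S_2|)}$, so the factor can be $\Theta(\sqrt{n})$. Multiplied into an $\epsilon$ of order $1/\sqrt{n}$, the resulting distance-to-$L$ bound is $\Theta(1)$, which is useless. A concrete two-dimensional example shows the failure mode: with $\Pi\q_{S_2}^0=(1,0)$, $\Pi\q_{T''_a}^0=(-10,1)$, $\Pi\q_{T''_p}^0=(10,1)$, both pairs are about $0.1$-close to collinear, yet the average $(0,1)$ is a constant far from collinear with $(1,0)$.

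The paper sidesteps this precisely by \emph{not} reasoning about distance to $L$. After the first application of Lemma~\ref{lem:supplementary} establishes that for most $u\in T$ the vector $\Pi\q_u^0$ is close to antipodal or podal with $\Pi\q_{S_2}^0$, the paper applies Lemma~\ref{lem:supplementary} a \emph{second} time, now with $\q = \Pi\q_{S_2}^0$ and $S$ being this set of $u$'s. The geometry inside Lemma~\ref{lem:supplementary} (the analysis of where $\q_{S_1\cup S_2}$ can land, and the circle-of-radius-$\epsilon/2$ argument) is exactly what replaces the naive ``both averages are near the line so their combination is too'' reasoning, and it handles the mixed antipodal/podal situation without the scale-dependent blowup. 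If you want to complete your argument, you should follow this route: a second invocation of Lemma~\ref{lem:supplementary} rather than a direct perpendicular-distance computation.
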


\begin{proof}
Let $R(\theta)$ be a constant depending only on $\theta$ that is defined later. If there are at least $\theta^2 |S_1||S_2|$ pairs $(u,v)$ where $u \in S_1$ and $v \in S_2$ such that $\Pi \q_u^0$ and $\Pi \q_v^0$ are $\sqrt{\frac{R(\theta)}{|S_1| + |S_2|}}$-far from collinear then we are done. Otherwise, the number of pairs $(u,v)$ where $u \in S_1, v \in S_2$ and where $\Pi \q_u^0$ and $\Pi \q_v^0$ are $\sqrt{\frac{R(\theta)}{|S_1| + |S_2|}}$-close to antipodal or podal is at least $(1-\theta^2)|S_1||S_2|$. The pigeonhole argument used in Lemma \ref{lem:many_pairs1} implies that there exists $T \subseteq S_1$ of size at least $(1-\theta)|S_1|$ such that for all $u \in T$,  there exists $T_u \subseteq S_2$ of size at least $(1-\theta)|S_2|$ where for all $v \in T_u$,  $\Pi \q_u^0$ and $\Pi \q_v^0$ are $\sqrt{\frac{R(\theta)}{|S_1| + |S_2|}}$-close to either antipodal or podal.

We now fix a particular $u \in T$ and consider the set $T_u$ and let $\q = \Pi \q_u^0$, $S = T_u$, and $\epsilon = \sqrt{\frac{R(\theta)}{|S_1| + |S_2|}}$ in Lemma \ref{lem:supplementary}. If case $3$ in Lemma \ref{lem:supplementary} holds for any $u \in T$ then we are done. Otherwise, if case $1$ holds for at least $\theta|T|$ vertices $u \in T$,  then we are also done. Therefore, we must have that case $2$ holds for at least $(1-\theta)|T| = (1-\theta)^2|S_1|$ vertices $u \in T$. By a similar application of the triangle inequality as in Lemma \ref{lem:large_norm2}, this means that for at least $(1-\theta)^2|S_1|$ vertices $u \in T$,  we have that $\Pi \q_u^0$ and $\Pi \q_{S_2}^0$ are not $\sqrt{\frac{R_1(\theta)}{|S_1| + |S_2|}}$-far from either antipodal or podal for some constant $R_1(\theta)$ that comes from Lemma \ref{lem:supplementary}.

We now consider Lemma \ref{lem:supplementary} again where we take $\q = \Pi \q_{S_2}^0$, the set $S$ to be the set of the $(1-\theta)^2|S_1|$ vertices $u \in T$ described above, and $\epsilon = \sqrt{\frac{R_1(\theta)}{|S_1| + |S_2|}}$. Again if case $3$ holds then we are done. Otherwise, if case $1$ holds for $\theta$-fraction of the vertices $u \in T$ then we are also done. Lastly, if we are in case $2$ for at least $(1-\theta)$-fraction of the vertices $u \in T$, we know that $\Pi \q_{T}^0$ and $\Pi \q_{S_2}^0$ are either $\sqrt{\frac{R_2(\theta)}{|S_1| + |S_2|}}$-close to antipodal or $\sqrt{\frac{R_2(\theta)}{|S_1| + |S_2|}}$-close to podal for some constant $R_2(\theta)$ depending only on $\theta$. This is a contradiction to Lemma \ref{lem:large_norm2} by picking $R(\theta)$ such that $R_2(\theta)= \left( \frac{1}{\sqrt{12}} - 2\sqrt{\theta} \right)^2$ after working through the computations from Lemma \ref{lem:supplementary}.

Therefore in all cases, we can find $C_1(\theta)n^2$ pairs of vertices $(u,v)$ where $\Pi \q_u^0$ and $\Pi \q_v^0$ are $\left( \frac{1}{\sqrt{12}} - 2\sqrt{\theta} \right)^2\frac{1}{|S_1| + |S_2|}$-far from antipodal and podal and hence $\left( \frac{1}{\sqrt{12}} - 2\sqrt{\theta} \right)^2\frac{1}{|S_1| + |S_2|}$-far from collinear. In particular, using Lemma \ref{lem:sparse_cuts}, we can take $C_1(\theta) = (1-\theta)^4\theta^4|S_1|^2$. \qedhere
\end{proof}
Letting $\theta = \frac{1}{100}$ and using the fact that $|S_1| + |S_2| \le \frac{2n}3$, Lemma \ref{lem:many_pairs2} gives us $10^{-20}\epsilon^4n^2$ pairs of vertices $(u,v)$ such that $\Pi \q_u^0$ and $\Pi \q_v^0$ are $\left(\frac{1}{100} \frac{1}{\sqrt{n}} \right)$-far from collinear. 

We now make an observation relating $\Pi \q_u^0, \Pi \q_v^0$ to $\q_u^t, \q_v^t$. Write $\alpha \q_u^0 + \beta \q_v^0 = \sum_i w_i \V_i$ in the eigenbasis of $\M$ and let $H$ denote the set of eigenvalues of $\M$ larger than $1-4 \delta$. Then
$$ \norm{\alpha \Pi \q_u^0 + \beta \Pi \q_v^0}^2 = \sum_{i \in H}w_i^2.$$ Furthermore,
$$ \norm{\alpha \q_u^t + \beta \q_v^t }^2 \ge (1-4\delta)^{2t} \sum_{i \in H} w_i^2 = (1-4\delta)^{2t}\norm{\alpha \Pi \q_u^0 + \beta \Pi \q_v^0}^2.$$
Thus using Lemma \ref{lem:line_proj2}, it follows that if $\Pi \q_u^0$ and $\Pi \q_v^0$ are $\epsilon$-far from antipodal or podal, then $\q_u^t$ and $\q_v^t$ are $(1-4\delta)^t\epsilon$-far from antipodal or podal respectively. Using this observation we translate the result of Lemma \ref{lem:many_pairs2} to the $\q_u^t$ vectors.

\begin{lemma} \label{lem:many_repr_pairs}
Let $G$ be $\epsilon$ far from $(2, \phi^*)$-clusterable. Then there are $10^{-20}\epsilon^4n^2$ pairs of vertices $(u,v)$ such that $\q_u^t$ and $\q_v^t$ are $\left(\frac{1}{100} \, \frac{1}{n^{1/2+ c\mu}} \right)$-far from antipodal and podal where \newline $c = 128c_{3.3}c_{3.10}$ and $c_{3.3}, c_{3.10}$ are constants defined in Lemmas \ref{lem:eigenvalue_gap} and \ref{lem:sparse_cuts} respectively.  
\end{lemma}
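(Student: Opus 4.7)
The plan is to chain together the earlier lemmas essentially mechanically: \textbf{Lemma \ref{lem:sparse_cuts}} furnishes the sparse-cut structure, \textbf{Lemma \ref{lem:many_pairs2}} converts that structure into many pairs that are far from collinear in the \emph{projected, time-$0$} picture, and the contraction estimate stated just before the lemma transfers this to the $\q_u^t$ vectors. The hypothesis $\phi^* \le \mu \phi^2 \epsilon^2$ is precisely what makes the contraction tolerable.

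First I would invoke Lemma \ref{lem:sparse_cuts} with $\phi' = \phi^*$. Since $G$ is $\epsilon$-far from $(2,\phi^*)$-clusterable and $\phi^* \le \mu\phi^2\epsilon^2 \le \alpha\epsilon$ (using $\mu < C$ and taking $\phi,\epsilon \le 1$), we obtain a partition $V = S_1 \cup S_2 \cup S_3$ with $|S_i| \ge \epsilon^2 n/(2\cdot 10^4)$ and $\phi_G(S_i) \le c_{3.10}\phi^*\epsilon^{-2} \le c_{3.10}\mu\phi^2$ for each $i$. I will relabel so that $|S_1|+|S_2| \le 2n/3$, which is the standing assumption of Lemma \ref{lem:many_pairs2}. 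Setting $\delta := c_{3.10}\mu\phi^2$, the cut conductance bound on $(S_i,V\setminus S_i)$ is at most $\delta$ for $i \in \{1,2\}$.

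Next I would apply Lemma \ref{lem:many_pairs2} to these $S_1,S_2$ with $\theta = 1/100$, taking $\Pi$ to be the projection onto eigenvectors of $\M$ with eigenvalue greater than $1-4\delta$. This yields $C_1(\theta)n^2$ pairs $(u,v)$ for which $\Pi\q_u^0$ and $\Pi\q_v^0$ are $\sqrt{C_2(\theta)/(|S_1|+|S_2|)}$-far from collinear; plugging in the explicit $C_1(\theta) = (1-\theta)^4\theta^4|S_1|^2$ together with $|S_1| \ge \epsilon^2 n/(2\cdot 10^4)$ and $|S_1|+|S_2| \le n$ gives at least $10^{-20}\epsilon^4 n^2$ pairs where the far-from-collinear distance is at least $\tfrac{1}{100}\cdot n^{-1/2}$ (with room to spare in the absolute constant). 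Equivalently, by the parenthetical remark following Definitions \ref{defn:antipodal}--\ref{defn:podal}, these pairs are $\tfrac{1}{100}n^{-1/2}$-far from \emph{both} antipodal and podal.

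Finally I would apply the observation the paper records immediately before the lemma statement: writing $\alpha\q_u^0+\beta\q_v^0$ in the eigenbasis of $\M$ and noting that $\M^t$ multiplies the coordinate on eigenvector $\V_i$ by $\nu_i^t$, one obtains
\[
\norm{\alpha\q_u^t+\beta\q_v^t} \;\ge\; (1-4\delta)^t\,\norm{\alpha\Pi\q_u^0+\beta\Pi\q_v^0},
\]
so by Lemma \ref{lem:line_proj2}, being $\epsilon'$-far from antipodal (resp.\ podal) at the projected level implies being $(1-4\delta)^t\epsilon'$-far from antipodal (resp.\ podal) at time $t$. It remains to show $(1-4\delta)^t \ge n^{-c\mu}$ with $c = 128c_{3.3}c_{3.10}$. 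Using $\log(1-4\delta) \ge -8\delta$ for small $\delta$ together with $\delta \le c_{3.10}\mu\phi^2$ and $t \le 64c_{3.3}\log n/\phi^2$ (the first component of the max in the definition of $t$), one gets $t\log(1-4\delta) \ge -512\, c_{3.3}c_{3.10}\,\mu\log n$, and after absorbing the factor of $4$ into the constant and being careful at the edge this produces the advertised exponent. Combining with the $\tfrac{1}{100}n^{-1/2}$ bound from the previous paragraph gives the claimed $\tfrac{1}{100}\cdot n^{-(1/2+c\mu)}$-far from antipodal and podal for all $10^{-20}\epsilon^4 n^2$ pairs.

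The main obstacle is purely bookkeeping: chasing the constants through Lemma \ref{lem:many_pairs2} (which is stated with an implicit $C_1(\theta)$) and through the $(1-4\delta)^t$ exponent so that they match the declared values $10^{-20}\epsilon^4$ and $128c_{3.3}c_{3.10}$ in the lemma statement. No new geometric idea is needed; the hard work was done in Lemmas \ref{lem:large_norm}--\ref{lem:many_pairs2}.
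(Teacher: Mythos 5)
Your proposal follows the same chain as the paper's own proof: invoke Lemma~\ref{lem:sparse_cuts} to produce $S_1,S_2$ with $\delta \le c_{3.10}\mu\phi^2$, apply Lemma~\ref{lem:many_pairs2} with $\theta=1/100$ to get $\approx 10^{-20}\epsilon^4 n^2$ far-from-collinear pairs in the $\Pi\q^0$ picture, then use the displayed contraction bound $\norm{\alpha\q_u^t+\beta\q_v^t}\ge(1-4\delta)^t\norm{\alpha\Pi\q_u^0+\beta\Pi\q_v^0}$ and the choice of $t$ to get $(1-4\delta)^t\ge n^{-c\mu}$. The bookkeeping wrinkles you flag near the end are present in the paper's own proof too (it writes $t\ge 32c_{3.3}\log n/\phi^2$ where it actually needs an upper bound on $t$, and uses $(1-4\delta)^{2t}\ge e^{-4\delta t}$, which is backwards), and note that your line $t\le 64c_{3.3}\log n/\phi^2$ has the same direction problem since $t$ is defined via $\max(c_{3.3},c_{3.5})$; these constant-chasing issues are shared by both arguments and do not affect the structure of the proof.
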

\begin{proof}
If $G$ is $\epsilon$-far from $(2, \phi^*)$-clusterable then by Lemma \ref{lem:sparse_cuts}, we can let find sets $S_1$ and $S_2$ such that $|S_1| + |S_2| \le \frac{2n}3$ and the cut $(S_i, V \setminus S_i)$ has conductance less than $\delta$ for each $i \in \{1,2\}$ where $\delta \le c_{3.10} \phi^* \epsilon^{-2} \le c_{3.10} \mu \phi^2. $ Then from Lemma \ref{lem:many_pairs2} and using the fact that $\Pi$ projects into the eigenvectors of $\M$ greater than $1-4\delta$, we know that we can find $10^{-20}\epsilon^4n^2$ pairs of vertices $(u,v)$ such that $\q_u^t$ and $\q_v^t$ are $\left( \frac{1}{100} \, \frac{(1-4\delta)^{t}}{\sqrt{n}} \right)$-far from collinear. Using the fact that $t \ge \frac{32c_{3.3} \log n}{\phi^2}$, we have
$$ \frac{(1-4\delta)^{2t}}{n} \ge \frac{\exp(-4 \delta t)}n \ge \frac{\exp(-128c_{3.3}c_{3.10} \mu \log n)}{n} = \frac{1}{n^{1 + 128c_{3.3}c_{3.10}\mu}},$$
as desired.
\end{proof}
Lemma \ref{lem:many_repr_pairs} says there $\Theta(n^2)$ pairs of vertices $(u,v)$ such that $\q_u^t$ and $\q_v^t$ are $O\left( \frac{1}{n^{1/2 + O(1)\mu}}\right)$-far from collinear. Using this result, we finally show that \textbf{Cluster-Test} with the parameters defined in Theorem \ref{thm:main} also passes the soundness case.
\begin{lemma} \label{lem:algproof_soundness}
\textbf{Cluster-Test} with the parameters defined in Theorem \ref{thm:main} rejects graphs $\epsilon$-far from $(2, \mu \phi^2 \epsilon^2)$-clusterable graphs with probability greater than $ \frac{2}3$.
\end{lemma}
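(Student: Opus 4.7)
The plan is to mirror the completeness proof in Lemma~\ref{lem:algproof_completeness}, this time lower-bounding the per-round rejection probability by $\Omega(\epsilon^4)$ and amplifying via the $R = 10^{22}/\epsilon^4$ independent rounds. The key input is Lemma~\ref{lem:many_repr_pairs}, which under the hypothesis that $G$ is $\epsilon$-far from $(2,\mu\phi^2\epsilon^2)$-clusterable produces a set $W$ of at least $10^{-20}\epsilon^4 n^2$ \emph{witness} pairs $(u,v)$ for which $\q_u^t$ and $\q_v^t$ are quantitatively far from collinear. Feeding this distance into Lemma~\ref{lem:eigsofA} shows that for every witness pair both eigenvalues of the exact matrix $\A_{u,v}$ strictly dominate the threshold $\Lambda$, with slack of order $\Lambda$; this slack is what the choice $\xi = \Lambda/10$ in Theorem~\ref{thm:main} is designed to absorb.

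Next, we analyze a single round conditional on the sampled pair $(u,v)$ lying in $W$. If the $l_2\textbf{-Norm-Tester}$ rejects either $\p_u^t$ or $\p_v^t$ in step~4, the round immediately rejects $G$. Otherwise it accepts, and step~5 produces $\tilde{\A}_{u,v}$ whose entries are each within additive error $\xi$ of those of $\A_{u,v}$ with probability at least $(1-\eta)^4$, giving $\norm{\tilde{\A}_{u,v}-\A_{u,v}}_F \le 2\xi$. Proposition~\ref{thm:weyl} then forces both eigenvalues of $\tilde{\A}_{u,v}$ to remain above $\Lambda$, and the round again rejects. Thus, whichever branch the Norm-Tester takes, sampling a witness pair leads to rejection with probability at least $1 - O(\eta)$.

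To amplify, observe that a uniformly sampled pair of vertices lies in $W$ with probability at least $10^{-20}\epsilon^4$, so the expected number of witnesses encountered across $R$ rounds is at least $100$, and a standard Chernoff bound gives that at least one witness is hit with probability far exceeding $5/6$. A separate union bound over the $O(R)$ distribution-testing subroutine invocations, each failing with probability at most $\eta = 1/(24R)$, absorbs another $O(\eta R) = O(1)$-fraction of bad-event mass, which stays safely below $1/6$. Combining these two estimates yields a total rejection probability of at least $2/3$, and the query complexity is $O(tNR) = O(n^{1/2+O(1)\mu}\cdot\text{poly}(1/\epsilon,1/\phi,\log n))$, matching the completeness analysis.

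The main obstacle, in contrast to the completeness case, is that we have no a priori bound on $\norm{\p_u^t}^2$ for witness vertices $u$, so the Norm-Tester's output is not controlled in advance. The two-branch argument above is designed precisely to sidestep this: direct rejection is a success when the Norm-Tester rejects, and the eigenvalue-gap estimate still carries through when it accepts, because Theorem~\ref{thm:main}'s numerology ensures that the true eigenvalues of $\A_{u,v}$ on witness pairs strictly exceed $\Lambda + 2\xi$, leaving the Weyl perturbation harmless.
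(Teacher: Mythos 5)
Your proposal is correct and mirrors the paper's proof: it reduces soundness to sampling a witness pair from the $10^{-20}\epsilon^4 n^2$ guaranteed by Lemma~\ref{lem:many_repr_pairs}, lower-bounds the eigenvalues of $\A_{u,v}$ via Lemma~\ref{lem:eigsofA}, absorbs the estimation error via Weyl's inequality, and amplifies over $R$ rounds. The only cosmetic difference is that you split the amplification into a witness-hit event and a subroutine-success event joined by a union bound, whereas the paper directly computes the per-round rejection probability as (witness probability)$\times(1-\eta)^4$ and applies $(1-p)^R \le e^{-Rp}$; both give the same conclusion.
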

\begin{proof}
Let $G$ be $\epsilon$-far from $(2, \mu \phi^2 \epsilon^2)$-clusterable. We analyze one round of \textbf{Cluster-Test} and lower bound the total rejection probability of one round. Note that \textbf{Cluster-Test} samples a pair of vertices $u$ and $v$ uniformly at random from $G$ at each round. Recall from Lemma \ref{lem:algproof_completeness} that there are three ways one round can reject $G$:
\begin{enumerate}
    \item One of the vertices $u$ or $v$ in the complement of $V'$ in Lemma \ref{lem:small_norm}.
    \item $l_2\textbf{-Norm-Tester}$ rejects $\p_u^t$ or $\p_v^t$ in step $4$ of \textbf{Cluster-Test}.
    \item Both of the eigenvalues of $\tilde{\A}_{u,v}$ are larger than $\Lambda$.
\end{enumerate}
We assume that \textbf{Cluster-Test} does not reject $G$ in cases $1$ and $2$ because these cases can only increase the rejection probability so we focus solely on case $3$.

Let $c =  128c_{3.3}c_{3.10}$ where $c_{3.3}, c_{3.10}$ are constants defined in Lemmas \ref{lem:eigenvalue_gap} and \ref{lem:sparse_cuts} respectively. Suppose that $\q_u^t$ and $\q_v^t$ are $\left(\frac{1}{100} \frac{1}{n^{1/2 + c\mu}}\right)$-far from collinear. Then Lemma \ref{lem:eigsofA} implies that the eigenvalues of $\A_{u,v}$ are larger than $\frac{1}{10^4} \frac{1}{n^{1 + c\mu}}$. Now the matrix $\tilde{\A}_{u,v}$ that $\textbf{Cluster-Test}$ computes can be written as $\tilde{\A}_{u,v} = \A_{u,v} + \mathbf{E}$ where each entry of the $2$ by $2$ matrix $\mathbf{E}$ is at most $\xi$ with probability $1-\eta$ due to $l_2 \textbf{-Inner-Product-Estimator}$. Therefore, $\norm{\mathbf{E}}_F \le 2 \xi$ with probability $(1-\eta)^4$. If this also holds, then by Weyl's inequality
both of the eigenvalues of $\tilde{\A}_{u,v}$ are larger than
$$\frac{1}{10^4} \frac{1}{n^{1 + c\mu}} - 2\xi > \Lambda.$$
Thus, the rejection probability of case $3$ is at least the probability that $\q_u^t$ and $\q_v^t$ are $\left(\frac{1}{100} \frac{1}{n^{1/2 + c\mu}}\right)$-far from collinear times $(1-\eta)^4$.

Now from Lemma \ref{lem:many_repr_pairs}, the probability that $\q_u^t$ and $\q_v^t$ are $\left(\frac{1}{100} \frac{1}{n^{1/2 + c\mu}}\right)$-far from collinear is at least $10^{-20}\epsilon^4$. Therefore, the probability that both the eigenvalues of $\tilde{\A}_{u,v}$ are larger than $\Lambda$ is at least $10^{-20}\epsilon^4 (1-\eta)^4 \ge \frac{10^{-20}\epsilon^4}{16}$. Thus, the probability that $\textbf{Cluster-Test}$ rejects $G$ in one round is at least $ \frac{10^{-20}\epsilon^4}{16}$.
Hence the probability that all $R$ trials of $\textbf{Cluster-Test}$ accept is at most 
$$ \left( 1 -  \frac{10^{-20}\epsilon^4}{16} \right)^R < \exp(-2) < \frac{1}3$$
which means that \textbf{Cluster-Test} rejects $G$ probability greater than $\frac{2}3$. The query complexity is $O(tNR) = O(n^{1/2+O(1)\mu} \cdot \text{poly}(1/\epsilon, 1/\phi,\log n))$. \qedhere
\end{proof}
Lemmas \ref{lem:algproof_completeness} and \ref{lem:algproof_soundness} together prove Theorem \ref{thm:main}, as desired.
\bigbreak
\noindent \textbf{Acknowledgements:}
The authors would like to thank the MIT UROP+ program for the opportunity to work on this project. 

\bibliographystyle{plain}
\bibliography{Paper}
\end{document}